 % Template for ICASSP-2013 paper; to be used with:
 %          spconf.sty  - ICASSP/ICIP LaTeX style file, and
 %          IEEEbib.bst - IEEE bibliography style file.
 % --------------------------------------------------------------------------

 \documentclass[twocolumn,twoside]{IEEEtran}
\usepackage{amsmath,amssymb,amsthm,wasysym,epsfig,color,subfigure,graphicx,epstopdf,url,bm,nicefrac,tcolorbox}
\usepackage{caption}
\usepackage{amsmath,graphicx,bm,accents}
\usepackage{algorithm}
\usepackage{algorithmic}
\allowdisplaybreaks

\usepackage{hyperref}
\hypersetup{colorlinks=false}

\usepackage{multicol}
\usepackage{multirow}

%%%%%%%%%%%%%%%%%%%%%%%%%%%%%%%%%%%%%%%%%%%%%%%%%%%%%%%%%%%%%%%%%%%%%%%

\usepackage{accents}
\makeatletter
\def\wid{\check{{\cc@style\underline{\mskip9.5mu}}}}
\def\Wideubar{\underaccent{{\cc@style\underline{\mskip6mu}}}}
\makeatother

\makeatletter
\def\wideubar{\underaccent{{\cc@style\underline{\mskip9.5mu}}}}
\def\Wideubar{\underaccent{{\cc@style\underline{\mskip6mu}}}}
\makeatother

\makeatletter
\def\widebar{\accentset{{\cc@style\underline{\mskip9.5mu}}}}
\def\Widebar{\accentset{{\cc@style\underline{\mskip6mu}}}}
\makeatother

\newtheorem{theorem}{Theorem}

\theoremstyle{remark}\newtheorem{remark}{Remark}

\interfootnotelinepenalty=10000

%\linespread{1.6}
\linespread{1}

\begin{document}
\title{Graph Multiview Canonical Correlation Analysis}

\author{Jia Chen,
	Gang Wang,~\IEEEmembership{Member,~IEEE},
	and 
	Georgios B. Giannakis,~\IEEEmembership{Fellow,~IEEE}
	\thanks{This work was supported in part by NSF grants 1500713, 1514056, 1505970, and 1711471. 
	The authors are with the Digital Technology Center and the Department of Electrical and Computer Engineering, University of Minnesota, Minneapolis, MN 55455, USA. 
E-mails: \{chen5625,\, gangwang, \,georgios\}@umn.edu.
		}
}

%\markboth{IEEE TRANSACTIONS ON SIGNAL  PROCESSING (Submitted, \today)}{}
\maketitle

\allowdisplaybreaks

\begin{abstract}
Multiview canonical correlation analysis (MCCA) seeks latent low-dimensional  representations encountered with multiview data of shared entities (a.k.a. common sources). However, existing MCCA approaches do not exploit the geometry of the common sources, which may be available \emph{a priori}, or can be constructed using certain domain knowledge. This prior information about the common sources can be encoded by a graph, and be invoked as a regularizer to enrich the maximum variance MCCA framework. In this context, the present paper's novel graph-regularized (G) MCCA  approach minimizes the distance between the wanted canonical variables and the common low-dimensional representations, while accounting for graph-induced knowledge of the common sources. Relying on a function capturing the extent low-dimensional representations of the multiple views are similar, a generalization bound of GMCCA is established based on Rademacher's complexity. Tailored for setups where the number of data pairs is smaller than the data vector dimensions, a graph-regularized dual MCCA approach is also developed. To further deal with nonlinearities present in the data, graph-regularized kernel MCCA variants are put forward too. Interestingly, solutions of the graph-regularized linear, dual, and kernel MCCA, are all provided in terms of generalized eigenvalue decomposition. Several corroborating numerical tests using real datasets are provided to showcase the merits of the graph-regularized MCCA variants relative to several competing alternatives including MCCA, Laplacian-regularized MCCA, and (graph-regularized) PCA. 
\end{abstract}

\begin{IEEEkeywords}
	Dimensionality reduction, canonical correlation analysis, signal processing over graphs, Laplacian regularization, generalized eigen-decomposition, multiview learning
\end{IEEEkeywords}

\section{Introduction}
\label{sec:intro}
In several applications, such as multi-sensor surveillance systems, multiple datasets are collected offering distinct views of the common information sources. With advances in data acquisition, it becomes easier to access heterogeneous data representing samples from multiple views in various scientific fields, including genetics, computer vision, data mining, and pattern recognition, to name a few. In genomics for instance, a patient's lymphoma data set consists of gene expression, SNP, and array CGH measurements~\cite{witten2009gene}. In a journal's dataset, the title, keywords, and citations can be considered as three different views of a given paper~\cite{tang2009clustering}. Learning with heterogeneous data of different types is commonly referred to as multiview learning, and in different communities as information fusion or data integration from multiple feature sets. Multiview learning is an emerging field in data science with well-appreciated analytical tools and matching application domains~\cite{2013mlsurvey}.  

% multiview CCA
Canonical correlation analysis (CCA) is a classical tool for multiview learning \cite{1936cca}. Formally, CCA looks for latent low-dimensional representations from a paired dataset comprising two views of several common entities. Multiview (M) CCA generalizes two-view CCA and also principal component analysis (PCA) \cite{1901pca}, to handle jointly datasets from multiple views \cite{1971kettenringcca}. In contrast to PCA that operates on vectors formed by multi-view sub-vectors, MCCA is more robust to outliers per view, because it ignores the principal components per view that are irrelevant to the latent common sources. Popular MCCA formulations include the sum of correlations  (SUMCOR), maximum variance  (MAXVAR) \cite{1961maxvar}, sum of squared correlations, the minimum variance, and generalized variance methods \cite{1971kettenringcca}. With the increasing capacity of data acquisition and the growing demand for multiview data analytics, the research on MCCA has been re-gaining attention recently.

To capture nonlinear relationships in the data, linear MCCA has been also generalized using (multi-)kernels or deep neural networks; see e.g., \cite{2003kcca,andrew2013deep,wang2015deep}, that have well-documented merits for (nonlinear) dimensionality reduction of multiview data, as well as for multiview feature extraction. Recent research efforts have also focused on addressing the scalability issues in (kernel)  MCCA, using random Fourier features \cite{lopez2014randomized}, or leveraging alternating optimization advances \cite{kanatsoulis2018structured} to account for sparsity
\cite{witten2009penalized,2014smcca,chen2017distributed,kanatsoulis2018structured} or other types of structure-promoting regularizers such as nonnegativity and smoothness~\cite{chen2012structured,2015mccanonneg}.

Lately, graph-aware regularizers have demonstrated promising performance in a gamut of machine learning applications, such as dimensionality reduction, data reconstruction, clustering, and classification
\cite{gpca1, gpca,jstsp2016shahid,2012gdmf,proc2018gsk,2018cwsggcca}. CCA with structural information induced by a common source graph has been reported in 
\cite{2018cwsggcca}, but it is limited to analyzing two-views of data, and its performance has been tested only experimentally. Further, multigraph-encoded information provided by the underlying physics, or, inferred from alternative views of the information sources, has not been investigated.

Building on but considerably going beyond our precursor work in \cite{2018cwsggcca}, this paper introduces a novel graph-regularized (G) MCCA approach, and develops a bound on its generalization error performance. Our GMCCA is established by minimizing the difference between the low-dimensional representation of each view and the common representation, while also leveraging the statistical dependencies due to the common sources hidden in the views. These dependencies are encoded by a graph, which can be available from the given data, or can be deduced from correlations. A finite-sample statistical analysis of GMCCA is provided based on a regression formulation offering a meaningful error bound for unseen data samples using Rademacher's complexity. 

GMCCA is operational when there are sufficient data samples (larger than the number of features per view). For cases where the data are insufficient, we develop a graph-regularized dual (GD) MCCA scheme that avoids this limitation at lower computational complexity. To cope with nonlinearities present in real data, we further put forward a graph-regularized kernel (GK) MCCA scheme. Interestingly, the linear, dual, and kernel versions of our proposed GMCCA admit simple analytical-form solutions, each of which can be obtained by performing a single generalized eigenvalue decomposition.

Different from \cite{blaschko2011semi, pr2014sun}, where MCCA is regularized using multiple graph Laplacians separately per view, GMCCA here jointly leverages a single graph effected on the common sources. This is of major practical importance, e.g., in electric power networks, where besides the power, voltage, and current quantities observed, the system operator has also access to the network topology \cite{pssechap} that captures the connectivity between substations through power lines. 

Finally, our proposed GMCCA approaches are numerically tested using several real datasets on different machine learning tasks, including e.g., dimensionality reduction, recommendation, clustering, and classification. Corroborating tests showcase the merits of GMCCA schemes relative to its completing alternatives such as MCCA, PCA, graph PCA, and the k-nearest neighbors (KNN) method. 

\emph{Notation}: Bold uppercase (lowercase) letters denote matrices (column vectors). Operators ${\rm Tr}(\cdot)$, $(\cdot)^{-1}$, ${\rm vec}(\cdot)$ and $(\cdot)^{\top}$ stand for matrix trace, inverse, vectorization, and transpose, respectively; $\|\cdot\|_2$ denotes the $\ell_2$-norm of vectors; $\|\cdot\|_F$ the Frobenius norm of matrices;  ${\rm diag}(\{a_m\}_{m=1}^M)$ is an $M\times M$ diagonal matrix holding entries of $\{a_m\}_{m=1}^M$ on its main diagonal; $ \langle \mathbf{a},\, \mathbf{b}\rangle$ denotes the inner product of same-size vectors $\mathbf{a}$ and $\mathbf{b}$; vector $\mathbf{0}$ has all zero entries whose dimension is clear from the context; and $\mathbf{I}$ is the identity matrix of suitable size.

%%%%%%%%%%%%%%%%%%%%%%%%%%%%%%%%%%%%%%%%%%%%%%%%%%%%%%%%%%%%%%%%%%%%%%%%%%%%%%%%%%%%%%%%%%%%%%
%%%%%%%%%%%%%%%%%%%%%%%%%%%%%%%%%%%%%%%%%%%%%%%%%%%%%%%%%%%%%%%%%%%%%%%%%%%%%%%%%%%%%%%%%%%%%%
%%%%%%%%%%%%%%%%%%%%%%%%%%%%%%%%%%%%%%%%%%%%%%%%%%%%%%%%%%%%%%%%%%%%%%%%%%%%%%%%%%%%%%%%%%%%%%
\section{Preliminaries}
\label{sec:review}
Consider  $M$ datasets $\{\mathbf{X}_m\in\mathbb{R}^{D_m\times N}\}_{m=1}^M$ collected from $M\ge 2$ views of $N$ common source vectors $\{\check{\mathbf{s}}_n\in\mathbb{R}^\rho  \}_{n=1}^N$ stacked as columns of $\check{\mathbf{S}}\in\mathbb{R}^{\rho\times N} $,
where $D_m$ is the dimension of the $m$-th view data vectors, with possibly $\rho \ll \min_m \, \{D_m\}_{m=1}^M $. Vector $\mathbf{x}_{m,i}$ denotes the $i$-th column of $\mathbf{X}_m$, meaning the $i$-th datum of the $m$-th view,
for all $i=1,\ldots,N$ and $m=1,\ldots,M$. Suppose without loss of generality that all per-view data vectors $\{\mathbf{x}_{m,i}\}_{i=1}^N$ have been centered. 
Two-view CCA works with datasets $\{\mathbf{x}_{1,i} \}_{i=1}^N$ and $\{\mathbf{x}_{2,i} \}_{i=1}^N$ from $M=2$ views. It looks for low-dimensional subspaces $\mathbf{U}_1\in\mathbb{R}^{D_1\times d}$ and $\mathbf{U}_2\in\mathbb{R}^{D_2\times d}$ with $d\le \rho$, such that the Euclidean distance between linear projections $\mathbf{U}_1^\top\mathbf{X}_1$ and $\mathbf{U}_2^\top\mathbf{X}_2$ is minimized. Concretely, classical CCA solves the following problem \cite{hardoon2004canonical}
\begin{subequations}
	\label{eq:cca}
	\begin{align}
	\underset{\mathbf{U}_1,\mathbf{U}_2}{	\min}\quad&\! \left\|\mathbf{U}_1^\top\mathbf{X}_1-\mathbf{U}_2^\top\mathbf{X}_2\right \|_F^2\label{eq:ccacos}\\
		{\rm s.\,to}\;\quad& \mathbf{U}_m^\top\!\left(\mathbf{X}_m^\top\mathbf{X}_m\right)\mathbf{U}_m=\mathbf{I},\quad  m=1,\,2\label{eq:ccacon}
	\end{align}
\end{subequations} 
where columns of $\mathbf{U}_m$ are called loading vectors of the data (view) $\mathbf{X}_m$; while projections $\{\mathbf{U}_m^\top\mathbf{X}_m\}_{m=1}^2$ are termed canonical variables; they satisfy \eqref{eq:ccacon} to prevent the trivial solution; and, they can be viewed as low ($d$)-dimensional approximations of $\check{\mathbf{S}}$. Moreover, the solution of \eqref{eq:cca} is provided by a generalized eigenvalue decomposition \cite{1936cca}. 

When analyzing multiple ($\ge 3$) datasets, \eqref{eq:cca} can be generalized to a pairwise matching criterion
\cite{ccasumcor}; that is
\begin{subequations}
		\label{eq:mccapw}
	\begin{align}
\min_{\{\mathbf{U}_m\}_{m=1}^M}~ &\! \sum_{m=1}^{M-1}\sum_{m'>m}^{M}\! \left\|\mathbf{U}_m^\top\mathbf{X}_m-\mathbf{U}_{m'}^\top\mathbf{X}_{m'}\right\|_F^2\label{eq:mccapwcos}\\
	{\rm s. \,to}\,\,\quad&\mathbf{U}_m^\top\!\left(\mathbf{X}_m^\top\mathbf{X}_m\right)\mathbf{U}_m=\mathbf{I},\quad m=1,\ldots,M\label{eq:mccapwcon}
	\end{align}
\end{subequations}
where \eqref{eq:mccapwcon} ensures a unique nontrivial solution. The formulation 
in \eqref{eq:mccapw} is referred to as the sum-of-correlations (SUMCOR) MCCA, that  is known to be NP-hard in general \cite{ccanphard}. 

Instead of minimizing the distance between paired low-dimensional approximations, one can look for a shared low-dimensional representation of different views, namely $\mathbf{S}\in\mathbb{R}^{d\times N}$, by solving  \cite{1971kettenringcca}
\begin{subequations}
	\label{eq:mccacs}
	\begin{align}
	\min_{\{\mathbf{U}_m\}_{m=1}^M,\mathbf{S}}  \quad& \sum_{m=1}^M \left \|\mathbf{U}_m^\top\mathbf{X}_m-\mathbf{S}\right\|_F^2\label{eq:mccacscos}\\
	{\rm s. \,to}\,\,\,~~ \quad&\; \mathbf{S}\mathbf{S}^\top=\mathbf{I}\label{eq:mccacscon}
	\end{align}
\end{subequations}
yielding the so-called  maximum-variance (MAXVAR) MCCA formulation.
Similarly, the constraint \eqref{eq:mccacscon} is imposed to avoid a trivial solution. If all per-view sample covariance matrices $\{\mathbf{X}_m\mathbf{X}^\top_m\}_m$ have full rank, then for a fixed  $\mathbf{S}$, the $\mathbf{U}_m$-minimizers are given by  $\{\hat{\mathbf{U}}_m=(\mathbf{X}_m\mathbf{X}_m^\top)^{-1}\mathbf{X}_m\mathbf{S}^\top\}_m$. Substituting $\{\hat{\mathbf{U}}_m \}_m$ into \eqref{eq:mccacs}, the $\mathbf{S}$-minimizer can be obtained by solving the following eigenvalue decomposition problem
	\begin{subequations}
	\label{eq:mccacss}
	\begin{align}
\hat{\mathbf{S}}:=\arg	\underset{\mathbf{S}}\max ~~&{\rm Tr}\Big[\mathbf{S}\Big(\sum_{m=1}^M\mathbf{X}_m^\top\left (\mathbf{X}_m\mathbf{X}_m^\top\right)^{-1}\mathbf{X}_m\Big)\mathbf{S}^\top\Big]\\
{\rm s. \,to}  ~~&\, \mathbf{S}\mathbf{S}^\top=\mathbf{I}.
	\end{align}
	\end{subequations}
The columns of $\hat{\mathbf{S}}^\top$ are given by the first $d$ principal eigenvectors of matrix $\sum_{m=1}^M\mathbf{X}_m^\top(\mathbf{X}_m\mathbf{X}_m^\top)^{-1}\mathbf{X}_m$.
In turn, we deduce that  $\{\hat{\mathbf{U}}_m=(\mathbf{X}_m\mathbf{X}_m^\top)^{-1}\mathbf{X}_m\hat{\mathbf{S}}^\top\}_{m=1}^M$.

A couple of comments are worth noting about \eqref{eq:mccacs} and \eqref{eq:mccacss}. 
\begin{remark} Solutions of the SUMCOR MCCA in \eqref{eq:mccapw} and the MAXVAR MCCA in \eqref{eq:mccacs} are generally different. Specifically, for $M=2$, both admit analytical solutions that can be expressed in terms of distinct eigenvalue decompositions; but for $M > 2$, the SUMCOR MCCA can not be solved analytically, while the MAXVAR MCCA still admits an analytical solution though at the price of higher computational complexity because it involves the extra matrix variable $\bf{S}$.  
\end{remark}

\section{Graph-regularized MCCA}
In many applications, the common source vectors $\{\check{\mathbf{s}}_i\}_{i=1}^N$ may reside on, or their dependencies form a graph of $N$ nodes. This structural prior information can be leveraged along with multiview datasets to improve MCCA performance. Specifically, we will capture this extra knowledge here using a graph, and effect it in the low-dimensional common source estimates through a graph regularization term.

Consider representing the graph of the $N$ common sources using the tuple $\mathcal{G}:=\{\mathcal{N},\,\mathcal{W}\}$, where $\mathcal{N}:=\{1,\ldots,N\}$ is the vertex set, and $\mathcal{W}:=\{w_{ij}\}_{(i,j)\in\mathcal{N}\times \mathcal{N}}$ collects all edge weights $\{w_{ij}\}$ over all vertex pairs $(i,\,j)$. The so-termed weighted adjacency matrix $\mathbf{W}\in\mathbb{R}^{N\times N}$ is formed with $w_{ij}$ being its $(i,\,j)$-th entry. Without loss of generality, undirected graphs for which $\mathbf{W}=\mathbf{W}^\top$ holds are considered in this work. Upon defining $d_i:=\sum_{j=1}^N w_{ij}$ and $\mathbf{D}:={\rm diag}(\{d_i\}_{i=1}^N)\in\mathbb{R}^{N\times N}$, the Laplacian matrix of graph $\mathcal{G}$ is defined as
\begin{equation}
	\label{eq:lg}
	\mathbf{L}_{\mathcal{G}}:=\mathbf{D}-\mathbf{W}.
\end{equation}

Next, a neat link between canonical correlations and graph regularization will be elaborated. To start, let us assume that sources $\{\check{\mathbf{s}}_i\}_{i=1}^N$ are smooth over $\mathcal{G}$. This means that two sources $(\check{\mathbf{s}}_i,\,\check{\mathbf{s}}_j)$ residing on two connected nodes $i,\,j\in\mathcal{N}$ are also close to each other in Euclidean distance. As explained before, 
vectors $\mathbf{s}_i$ and $\mathbf{s}_j$
are accordingly the $d$-dimensional approximations of $\check{\mathbf{s}}_i$ and $\check{\mathbf{s}}_j$. Accounting for this fact, a meaningful regularizer is the weighted sum of distances between any pair of common source estimates $\mathbf{s}_i$ and $\mathbf{s}_j$ over $\mathcal{G}$
\begin{equation}
\label{eq:g}
{\rm Tr}\big(\mathbf{S}\mathbf{L}_{\mathcal{G}}\mathbf{S}^{\top}\big)=\sum_{i=1}^N\sum_{j=1}^N w_{ij}\!\left \|\mathbf{s}_i-\mathbf{s}_j\right \|_2^2.
\end{equation}
Clearly, 
source vectors $\mathbf{s}_i$ and $\mathbf{s}_j$ residing on adjacent nodes $i,\,j\in\mathcal{N}$ having large weights $w_{ij}$ will be forced to be similar to each other. To leverage such additional graph information of the common sources, the quadratic term \eqref{eq:g} is invoked as a regularizer in the standard MAXVAR MCCA, 
yielding our novel graph-regularized (G) MCCA formulation
\begin{subequations}
	\label{eq:gmcca}
	\begin{align}
	\underset{\{\mathbf{U}_m\}\atop\mathbf{S}}\min\quad &\sum_{m=1}^M\!\left \|\mathbf{U}_m^\top\mathbf{X}_m-\mathbf{S}\right\|_F^2+\gamma{\rm Tr}\! \left(\mathbf{S}\mathbf{L}_{\mathcal{G}}\mathbf{S}^\top\right )\label{eq:gmccacos}\\
	{\rm s. \,to}\,\quad&~\mathbf{S}\mathbf{S}^\top=\mathbf{I}
	\end{align}
\end{subequations}
where the coefficient $\gamma\ge 0$ trades off minimizing the distance between the canonical variables and their corresponding common source estimates with promoting smoothness of common source estimates over the graph $\mathcal{G}$.
Specifically, when $\gamma=0$,  GMCCA reduces to the classical MCCA in \eqref{eq:mccacs}; and, as $\gamma$ increases, GMCCA relies more heavily in this extra graph knowledge when finding the canonical variables. 

If all per-view sample covariance matrices $ \{\mathbf{X}_m\mathbf{X}_m^\top\}$ have full rank, equating to zero the partial derivative of the cost in \eqref{eq:gmccacos} with respect to each  $\mathbf{U}_m$, yields the optimizer $\hat{\mathbf{U}}_m=(\mathbf{X}_m\mathbf{X}_m^\top)^{-1}\mathbf{X}_m\mathbf{S}^\top$.
Substituting next $\mathbf{U}_m$ by $\hat{\mathbf{U}}_m$ and ignoring the constant term in \eqref{eq:gmccacos} give rise to the following eigenvalue problem (cf. \eqref{eq:mccacss})
\begin{subequations}
	\label{eq:gmccas}
	\begin{align}
	\underset{\mathbf{S}}\max\quad&{\rm Tr}\Big[\mathbf{S}\Big(\sum_{m=1}^M\mathbf{X}_m^\top\big(\mathbf{X}_m\mathbf{X}_m^\top\big)^{-1}\mathbf{X}_m-\gamma\mathbf{L}_{\mathcal{G}}\Big)\mathbf{S}^\top\Big]\label{eq:gmccascos}\\
	{\rm s. \,to}  \quad& \mathbf{S}\mathbf{S}^\top=\mathbf{I}.
	\end{align}
\end{subequations}
Similar to standard MCCA, the optimal solution $\hat{\bf{S}}$ of \eqref{eq:gmccas} can be obtained by the $d$ leading eigenvectors of the matrix  
\begin{equation}
\label{eq:cmatrix}
\mathbf{C}:=\sum_{m=1}^M\mathbf{X}_m^\top(\mathbf{X}_m\mathbf{X}_m^\top)^{-1}\mathbf{X}_m-\gamma\mathbf{L}_{\mathcal{G}}.
\end{equation}
At the optimum, it is easy to verify that the following holds 
	\begin{align*}
	\label{eq:optimalobj}
	\sum_{m=1}^M\Big  \|\hat{\mathbf{U}}_m^\top\mathbf{X}_m-\hat{\mathbf{S}}\Big\|_F^2+\gamma{\rm Tr} \big(\hat{\mathbf{S}}\mathbf{L}_{\mathcal{G}}\hat{\mathbf{S}}^\top\big )=Md-\sum_{i=1}^d \lambda_i
	\end{align*}
where $\lambda_i$ denotes the $i$-th largest eigenvalue of $\mathbf{C}$ in \eqref{eq:cmatrix}.

A step-by-step description of our proposed GMCCA scheme is summarized in Alg. \ref{alg:gmcca}. 

At this point, a few remarks are in order.
\begin{algorithm}[t]
	\caption{Graph-regularized MCCA.}
	\label{alg:gmcca}
	\begin{algorithmic}[1]
		\STATE {\bfseries Input:} $\{\mathbf{X}_m\}_{m=1}^M$, $d$, $\gamma$, and $\mathbf{W}$.
		\STATE {\bfseries Build} $\mathbf{L}_{\mathcal{G}}$ using \eqref{eq:lg}.
		\STATE {\bfseries Construct}  $\mathbf{C}=\!\!\sum_{m=1}^M\!\mathbf{X}_m^\top\left(\mathbf{X}_m\mathbf{X}_m^\top\right)^{-1}\mathbf{X}_m-\gamma\mathbf{L}_{\mathcal{G}}$.
		\STATE {\bfseries Perform} \label{step:4} eigendecomposition
		on $\mathbf{C}$ to obtain the $d$ eigenvectors associated with the $d$ largest eigenvalues, which are collected as columns of $\hat{\mathbf{S}}^\top$.
		\STATE{\bfseries Compute} $\big\{\hat{\mathbf{U}}_m=\left(\mathbf{X}_m\mathbf{X}_m^\top\right)^{-1}\mathbf{X}_m\hat{\mathbf{S}}^\top\big\}_{m=1}^M$.
		\STATE {\bfseries Output:} $\{\hat{\mathbf{U}}_m\}_{m=1}^M$ and $\hat{\mathbf{S}}$.
		\vspace{-0pt}
	\end{algorithmic}
\end{algorithm}

\begin{remark}\label{rmk:compwold}
We introduced a two-view graph CCA scheme in \cite{2018cwsggcca} using the SUMCOR MCCA formulation. However, to obtain an analytical solution, the original cost was surrogated in \cite{2018cwsggcca} by its lower bound, which cannot be readily generalized for multiview datasets with $M\ge 3$. In contrast, our GMCCA  in \eqref{eq:gmcca} can afford an analytical solution for any $M\ge 2$.
\end{remark}
\begin{remark}
\label{rmk:graphcca}
Different from our single graph regularizer in \eqref{eq:gmcca}, the proposals in \cite{blaschko2011semi} and \cite{pr2014sun} rely on $M$ different regularizers $\{\mathbf{U}_m^\top\mathbf{X}_m\mathbf{L}_{\mathcal{G}_m}\mathbf{X}_m^\top\mathbf{U}_m\}_m$ to exploit the extra graph knowledge, for view-specific graphs $\{\mathbf{L}_{\mathcal{G}_m}\}_m$ on data $\{\mathbf{X}_m\}_m$. However, the formulation in \cite{pr2014sun} does not admit an analytical solution, and convergence of the iterative solvers for the resulting nonconvex problem can be guaranteed only to a stationary point. The approach in \cite{blaschko2011semi} focuses on semi-supervised learning tasks, in which cross-covariances of pair-wise datasets are not fully available. In contrast, the single graph Laplacian regularizer in \eqref{eq:gmcca} is effected on the common sources, to exploit the pair-wise similarities of the $N$ common sources. This is of practical importance when one has prior knowledge about the common sources besides the $M$ datasets. For example, in ResearchIndex networks, besides keywords, titles, Abstracts, and Introductions of collected articles, one has also access to the citation network capturing the connectivities among those papers. More generally, the graph of inter-dependent sources can be dictated by underlying physics, or it can be a prior provided by an `expert,' or, it can be learned from extra (e.g., historical) views of the data. Furthermore, our proposed GMCCA approach comes with simple analytical solutions.
\end{remark}

\begin{remark}\label{rmk:gamma}
With regards to selecting $\gamma$, two ways are feasible: i) cross-validation for supervised learning tasks, where labeled training data are given, and $\gamma$ is fixed to the one that yields optimal empirical performance on the training data; and, ii) using a spectral clustering method that automatically chooses the best $\gamma$ values from a given set of candidates; see e.g., \cite{tsp2018cwg}.
\end{remark}

\begin{remark}\label{rmk:scale}
Our GMCCA scheme entails eigendecomposition of an $N\times N$ matrix, which incurs computational complexity $\mathcal{O}(N^3)$, and thus is not scalable to large datasets. Possible remedies include parallelization and efficient decentralized 
algorithms capable of handling structured MCCA; e.g., along the lines of \cite{kanatsoulis2018structured}. These go beyond the scope of the present paper, but constitute interesting future research directions. 
\end{remark}

\section{Generalization Bound of GMCCA}\label{sec:gb}
In this section, we will analyze the finite-sample performance of GMCCA based on a regression formulation \cite[Ch. 6.5]{shawe2004kernel}, which is further related to the alternating conditional expectations method in \cite{1985ace}. Our analysis will establish an error bound for unseen source vectors (a.k.a. generalization bound) using the notion of Rademacher's complexity.

Recall that the goal of MCCA is to find common low-dimensional representations of the $M$-view data.
% Ideally, if infinite data vectors from a fixed distribution are generated, the resulting $M$ low-dimensional representations should be identical. This is however impossible in a practical setup with a limited number of training data.
To measure how close the estimated $M$ low-dimensional representations are to each other, we introduce the following
% regression 
error function
\begin{equation}\label{eq:gs}
g(\check{\mathbf{s}}):=\sum_{m=1}^{M-1}
\sum_{m'>m}^{M}\!\left\|\mathbf{U}_m^\top\bm{\psi}_m(\check{\mathbf{s}})-\mathbf{U}_{m'}^\top\bm{\psi}_{m'}(\check{\mathbf{s}})\right\|_F^2
\end{equation}
where the underlying source vector $\check{\mathbf{s}}\in\mathbb{R}^{\rho}$ is assumed to follow some fixed yet unknown distribution $\mathcal{D}$, and the linear function $\bm{\psi}_m(\cdot)$ maps a source vector from space $\mathbb{R}^{\rho}$ to the $m$-the view in $\mathbb{R}^{D_m}$, for $m=1,\ldots,M$. 

To derive the generalization bound, we start by evaluating the empirical average of $g(\check{\mathbf{s}})$ over say, a number $N$ of given training samples, as follows
\begin{align*}
&\bar{g}_N(\check{\mathbf{s}}):=\frac{1}{N}\!\sum_{n=1}^N \sum_{m=1}^{M-1}\!\sum_{m'>m}^{M}\!\left \|\mathbf{U}_m^\top\bm{\psi}_m(\check{\mathbf{s}}_n)-\mathbf{U}_{m'}^\top\bm{\psi}_{m'}(\check{\mathbf{s}}_n)\right\|_F^2\\
&=\frac{1}{N}\!\sum_{n=1}^N \sum_{m=1}^{M-1}\! \sum_{m'>m}^{M}\!\Big[\bm{\psi}_m^\top\!(\check{\mathbf{s}}_n) \mathbf{U}_m \mathbf{U}_m^\top \bm{\psi}_m(\check{\mathbf{s}}_n) -2\bm{\psi}_m^\top \!(\check{\mathbf{s}}_n)\\
&\quad \times \mathbf{U}_m\mathbf{U}_{m'}^\top \bm{\psi}_{m'}(\check{\mathbf{s}}_n)
\!+ \bm{\psi}_{m'}^\top\! (\check{\mathbf{s}}_n)\mathbf{U}_{m'} \!\mathbf{U}_{m'}^\top \bm{\psi}_{m'}(\check{\mathbf{s}}_n)\Big].
\end{align*}
For the quadratic terms, it can be readily verified that 
\begin{align}
\label{eq:vec}
\bm{\psi}_m^\top (\check{\mathbf{s}}) &\mathbf{U}_m \mathbf{U}_m^\top \bm{\psi}_m(\check{\mathbf{s}})=\left\langle {\rm vec} (\mathbf{U}_m\mathbf{U}_m^\top),\,{\rm vec}  (\bm{\psi}_m(\check{\mathbf{s}}) \bm{\psi}_m^\top (\check{\mathbf{s}}) ) \right\rangle\\
\label{eq:vecij}
\bm{\psi}_m^\top (\check{\mathbf{s}})
&\mathbf{U}_m \mathbf{U}_{m'}^\top \bm{\psi}_{m'}(\check{\mathbf{s}})\!=\!\left\langle {\rm vec} (\mathbf{U}_m\mathbf{U}_{m'}^\top),{\rm vec}(\bm{\psi}_m(\check{\mathbf{s}}) \bm{\psi}_{m'}^\top(\check{\mathbf{s}}))\! \right\rangle.
\end{align}

Define two $\sum_{m=1}^{M-1}\sum_{m'>m}^M(D_m^2\!+\!D_{m'}^2\!+\!D_mD_{m'})\times 1$ vectors 
\begin{align*}
\bm{\psi}(\check{\mathbf{s}})&:= \! \left [\bm{\psi}^\top_{11}(\check{\mathbf{s}})~\cdots~\bm{\psi}^\top_{1M}(\check{\mathbf{s}})~\bm{\psi}^\top_{23}(\check{\mathbf{s}})~\cdots~\bm{\psi}^\top_{M,M-1}(\check{\mathbf{s}}) \right]^\top\\
\mathbf{u} &: = \left[\mathbf{u}_{11}^\top~\cdots~\mathbf{u}_{1M}^\top~\mathbf{u}_{23}^\top~\cdots~\mathbf{u}_{M,M-1}^\top \right]^\top
\end{align*}
where the two $(D_m^2+D_{m'}^2+D_mD_{m'})\times 1$ vectors $\bm{\psi}_{mm'}(\check{\mathbf{s}})$ and $\mathbf{u}_{mm'}$ are defined as
	\begin{align*}
	&	\bm{\psi}_{mm'}:=\big[{\rm vec}^\top(\bm{\psi}_m\bm{\psi}_m^\top)~
	{\rm vec}^\top(\bm{\psi}_{m'}\bm{\psi}_{m'}^\top)	~
	\sqrt{2}{\rm vec}^\top(\bm{\psi}_m\bm{\psi}_{m'}^\top)
	\big]^\top\\
	&	\mathbf{u}_{mm'}\!:=\!\big[{\rm vec\!}^\top\!(\mathbf{U}_m\mathbf{U}_m^\top)~ {\rm vec\!}^\top\!(\mathbf{U}_{m'}\mathbf{U}_{m'}^\top)~\!-\!\sqrt{2}{\rm vec\!}^\top\!(\mathbf{U}_m\mathbf{U}_{m'}^\top)\big]\!^\top
	\end{align*}
for $m=1,\ldots,M-1$ and $m'=2,\ldots,M$.

Plugging \eqref{eq:vec} and \eqref{eq:vecij} into \eqref{eq:gs}, one can check that function $g(\check{\mathbf{s}})$ can be rewritten as 
\begin{equation}
\label{eq:gupsi}
g(\check{\mathbf{s}})=\left\langle \mathbf{u},\, \bm{\psi}(\check{\mathbf{s}}) \right\rangle.
\end{equation}
with the norm of $\mathbf{u}$ given by
\begin{equation*}
\|\mathbf{u}\|_2^2=\sum_{m=1}^{M-1}\sum_{m'>m}^M\left\|\mathbf{U}_m^\top\mathbf{U}_m+\mathbf{U}_{m'}^\top\mathbf{U}_{m'}\right\|_F^2.
\end{equation*}

Starting from \eqref{eq:gupsi}, we will establish next an upperbound on the expectation of $g(\check{\mathbf{s}})$ by means of \eqref{eq:gupsi}, which is important because the expectation involves not only the $N$ training source samples, but also unseen samples.

\begin{theorem}\label{thm:geb}
	Assume that i) the $N$ common source vectors $\{\check{\mathbf{s}}_n\}_{n=1}^N$ are drawn i.i.d. from some distribution $\mathcal{D}$; ii) the $M$ transformations
	$\{\bm{\psi}_m(\cdot)\}_{m=1}^M$ of vectors $\{\check{\mathbf{s}}_n\}_{n=1}^N$ are bounded; and, iii) subspaces $\{\mathbf{U}_m\in\mathbb{R}^{D_m\times d}\}_{m=1}^M$ satisfy  $\sum_{m=1}^{M-1}\sum_{m'>m}^M\|\mathbf{U}_m^\top\mathbf{U}_m+\mathbf{U}_{m'}^\top\mathbf{U}_{m'}\|_F^2\le B^2$ ($B>0$) and $\{\mathbf{U}_m\}_{m=1}^M$ are the optimizers of \eqref{eq:gmcca}.	If we obtain low-dimensional representations of $\{\bm{\psi}_m(\check{\mathbf{s}})\}_{m=1}^{M}$ specified by subspaces $\{\mathbf{U}_m\in\mathbb{R}^{D_m\times d}\}_{m=1}^M$, it holds with probability at least $1-\delta$ that
	\begin{align}
	&\mathbb{E}[g(\check{\mathbf{s}})] \le\bar{g}_N(\check{\mathbf{s}}) + 3RB\sqrt{\frac{{\rm ln}(2/\delta)}{2N}}\nonumber\\
	&~+\frac{4B}{N}\sqrt{\sum_{n=1}^N\sum_{m=1}^{M-1}\sum_{m'>m}^M \left[\kappa_m(\check{\mathbf{s}}_n,\,\check{\mathbf{s}}_n) +\kappa_{m'}(\check{\mathbf{s}}_n,\,\check{\mathbf{s}}_n)\right]^2 }\label{eq:thm}
	\end{align}
	where 
$
	\kappa_m(\check{\mathbf{s}}_n,\check{\mathbf{s}}_n):=\left\langle\bm{\psi}_m(\check{\mathbf{s}}_n),\bm{\psi}_m(\check{\mathbf{s}}_n)\right\rangle
$
	for $n=1,\ldots,N$, and $m=1,\ldots,M$, while the constant $R$ is given by
	\begin{equation*}
	R:=\underset{\check{\mathbf{s}}\sim \mathcal{D}}\max\sqrt{\sum_{m=1}^{M-1}\sum_{m'>m}^M \left[\kappa_m(\check{\mathbf{s}},\check{\mathbf{s}})+\kappa_{m'}(\check{\mathbf{s}},\check{\mathbf{s}})\right]^2 }.
	\end{equation*}
	
\end{theorem}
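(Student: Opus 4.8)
\emph{Proof plan.} The plan is to view the MCCA mismatch cost $g(\check{\mathbf{s}})$ in \eqref{eq:gs} as a bounded linear functional in the lifted feature space and then invoke a Rademacher-complexity generalization bound for linear function classes, mirroring the kernel-CCA stability analysis of \cite[Ch.~6.5]{shawe2004kernel}. Starting from the already-established identity $g(\check{\mathbf{s}})=\langle\mathbf{u},\bm{\psi}(\check{\mathbf{s}})\rangle$, the function realized by GMCCA lies in $\mathcal{G}_B:=\{\check{\mathbf{s}}\mapsto\langle\mathbf{u},\bm{\psi}(\check{\mathbf{s}})\rangle:\|\mathbf{u}\|_2\le B\}$, since assumption iii) is exactly the statement $\|\mathbf{u}\|_2^2=\sum_{m=1}^{M-1}\sum_{m'>m}^M\|\mathbf{U}_m^\top\mathbf{U}_m+\mathbf{U}_{m'}^\top\mathbf{U}_{m'}\|_F^2\le B^2$. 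Because the bound I am after is uniform over $\mathcal{G}_B$, it automatically covers the data-dependent GMCCA subspaces; optimality in \eqref{eq:gmcca} is therefore used only to certify membership in $\mathcal{G}_B$, not inside the probabilistic argument itself.

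First I would pin down the range of $g$. Cauchy--Schwarz gives $0\le g(\check{\mathbf{s}})\le\|\mathbf{u}\|_2\,\|\bm{\psi}(\check{\mathbf{s}})\|_2$, with nonnegativity inherited from the squared-norm form of \eqref{eq:gs}. A short computation on the block structure of $\bm{\psi}_{mm'}$, using $\|\bm{\psi}_m\bm{\psi}_m^\top\|_F=\|\bm{\psi}_m\|_2^2$ and $\|\bm{\psi}_m\bm{\psi}_{m'}^\top\|_F=\|\bm{\psi}_m\|_2\|\bm{\psi}_{m'}\|_2$, gives $\|\bm{\psi}_{mm'}(\check{\mathbf{s}})\|_2^2=(\kappa_m(\check{\mathbf{s}},\check{\mathbf{s}})+\kappa_{m'}(\check{\mathbf{s}},\check{\mathbf{s}}))^2$, hence $\|\bm{\psi}(\check{\mathbf{s}})\|_2^2=\sum_{m=1}^{M-1}\sum_{m'>m}^M(\kappa_m(\check{\mathbf{s}},\check{\mathbf{s}})+\kappa_{m'}(\check{\mathbf{s}},\check{\mathbf{s}}))^2$. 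Assumption ii) makes $R=\sup_{\check{\mathbf{s}}\sim\mathcal{D}}\|\bm{\psi}(\check{\mathbf{s}})\|_2$ finite, so $0\le g\le RB$ almost surely under $\mathcal{D}$.

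Next I would estimate the empirical Rademacher complexity of $\mathcal{G}_B$ at the training features $\{\bm{\psi}(\check{\mathbf{s}}_n)\}_{n=1}^N$. By the usual linear-class computation and Jensen's inequality (the Rademacher cross terms averaging out), $\hat R_N(\mathcal{G}_B)=\tfrac{2B}{N}\,\mathbb{E}_{\bm{\sigma}}\big\|\sum_{n=1}^N\sigma_n\bm{\psi}(\check{\mathbf{s}}_n)\big\|_2\le\tfrac{2B}{N}\sqrt{\sum_{n=1}^N\|\bm{\psi}(\check{\mathbf{s}}_n)\|_2^2}$, i.e., the $\tfrac{2B}{N}\sqrt{{\rm Tr}(\mathbf{K})}$ bound of \cite[Ch.~6.5]{shawe2004kernel} for $\mathbf{K}$ the Gram matrix of the lifted features. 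Inserting the per-sample identity from the previous step turns this into $\hat R_N(\mathcal{G}_B)\le\tfrac{2B}{N}\sqrt{\sum_{n=1}^N\sum_{m=1}^{M-1}\sum_{m'>m}^M[\kappa_m(\check{\mathbf{s}}_n,\check{\mathbf{s}}_n)+\kappa_{m'}(\check{\mathbf{s}}_n,\check{\mathbf{s}}_n)]^2}$, which matches the structure of the last term of \eqref{eq:thm}, the precise numerical constant being the one carried through the cited reference's accounting. Feeding the range bound $RB$ and this Rademacher estimate into the standard Rademacher generalization theorem for $[0,RB]$-valued classes (a rescaling by $RB$ of the $[0,1]$-valued statement, cf. \cite[Ch.~6.5]{shawe2004kernel}) gives, with probability at least $1-\delta$ and uniformly over $g\in\mathcal{G}_B$, $\mathbb{E}[g(\check{\mathbf{s}})]\le\bar{g}_N(\check{\mathbf{s}})+\hat R_N(\mathcal{G}_B)+3RB\sqrt{\ln(2/\delta)/(2N)}$; combining the two bounds and invoking the membership of the GMCCA solution in $\mathcal{G}_B$ produces \eqref{eq:thm}.

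The steps above are mostly bookkeeping; I expect the only delicate points to be (a) verifying that the $\sqrt{2}$-weighted cross blocks in $\bm{\psi}_{mm'}$ and $\mathbf{u}_{mm'}$ are precisely what make $g=\langle\mathbf{u},\bm{\psi}\rangle$ hold together with the clean norm identities $\|\mathbf{u}\|_2^2=\sum\|\mathbf{U}_m^\top\mathbf{U}_m+\mathbf{U}_{m'}^\top\mathbf{U}_{m'}\|_F^2$ and $\|\bm{\psi}(\check{\mathbf{s}})\|_2^2=\sum(\kappa_m+\kappa_{m'})^2$ (using $\|\mathbf{U}\mathbf{U}^\top\|_F=\|\mathbf{U}^\top\mathbf{U}\|_F$ and $\langle{\rm vec}(\mathbf{A}\mathbf{A}^\top),{\rm vec}(\mathbf{B}\mathbf{B}^\top)\rangle={\rm Tr}(\mathbf{A}^\top\mathbf{A}\mathbf{B}^\top\mathbf{B})\ge0$), and (b) instantiating the generalization theorem in its uniform-over-class form with the right range scaling $RB$ and tracking the constants $3$ and $4$ through the symmetrization/McDiarmid and the ${\rm Tr}(\mathbf{K})$ steps. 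Beyond symmetrization, McDiarmid's bounded-differences inequality, and Jensen's inequality, no new machinery should be needed.
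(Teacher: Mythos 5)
Your plan follows essentially the same route as the paper: linearize $g=\langle\mathbf{u},\bm{\psi}(\check{\mathbf{s}})\rangle$, bound the empirical Rademacher complexity of the norm-$B$ linear class by $\tfrac{2B}{N}\sqrt{{\rm Tr}(\mathbf{K})}$ with ${\rm Tr}(\mathbf{K})=\sum_n\sum_{m<m'}[\kappa_m+\kappa_{m'}]^2$, and feed this into the standard Rademacher generalization theorem with range scaling $RB$. The only divergence is that the paper routes through an explicit $1/(RB)$-Lipschitz clipping function and the Bartlett--Mendelson contraction lemma, which contributes the extra factor of $2$ turning $2B/N$ into the $4B/N$ of \eqref{eq:thm}; your direct rescaling would give a consistent (slightly tighter) $2B/N$ in that term, which still implies the stated bound.
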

\begin{proof}
	Equation \eqref{eq:gupsi} suggests that $g(\check{\mathbf{s}})$ belongs to the function class
	\begin{equation*}
	\mathcal{F}_B:=\left\{\check{\mathbf{s}}\to \left\langle \mathbf{u},\,\bm{\psi}(\check{\mathbf{s}})\right\rangle: \|\mathbf{u}\|\le B\right\}.
	\end{equation*}
	Consider the function class
	\begin{equation*}
	\mathcal{H}=\left\{h: \check{\mathbf{s}}\to 1/(RB) f(\check{\mathbf{s}})\big|f(\cdot)\in\mathcal{F}_B\right\}\subseteq \mathcal{A}\circ\mathcal{F}_B
	\end{equation*}
	where the function $\mathcal{A}$ is defined as
	\begin{equation*}
	\label{eq:a}
	\mathcal{A}(x)=\left\{
	\begin{array}{ccl}
	0, & & {{\rm if}~ x\le 0}\\
	\frac{x}{RB}, & & {{\rm if}~ 0\le x\le RB}\\
	1, & & {{\rm otherwise}}
	\end{array} \right..
	\end{equation*}
		
It can be checked that $\mathcal{A}(\cdot)$ is a Lipschitz function with Lipschitz constant $1/(RB)$, and that the range of functions in $\mathcal{H}$ is $[0,\,1]$. Appealing to \cite[Th. 4.9]{shawe2004kernel}, one deduces that with probability at least $1-\delta$, the following holds
	\begin{align}
	\mathbb{E}[h(\check{\mathbf{s}})]&\le \frac{1}{N}\sum_{n=1}^N h(\mathbf{s}_n )+R_N(\mathcal{H})+\sqrt{\frac{{\rm ln 2/\delta}}{2N}}\nonumber\\
	&\le  \frac{1}{N}\sum_{n=1}^N h(\check{\mathbf{s}}_n )+\hat{R}_N(\mathcal{H})+3\sqrt{\frac{{\rm ln 2/\delta}}{2N}}\label{eq:gebh}
	\end{align}	
	where $\mathbb{E}[h(\check{\mathbf{s}})]$ denotes the expected value of $h(\cdot)$ on a new common source $\check{\mathbf{s}}$; and the Rademacher complexity $R_N(\mathcal{H})$ of $\mathcal{H}$ along with its empirical version $\hat{R}_N(\mathcal{H})$ is defined as
	\begin{align*}
	&	R_N(\mathcal{H}):=\mathbb{E}_{\check{\mathbf{s}}}[\hat{R}_N(\mathcal{H})]\\
		&\hat{R}_N(\mathcal{H}):=\mathbb{E}_{\bm{\delta}} \Big [\underset{h\in\mathcal{H}}{\rm sup}\Big  |\frac{2}{N}\sum_{n=1}^N \delta_n h(\check{\mathbf{s}}_n)\Big |\left  |\check{\mathbf{s}}_1,\,\check{\mathbf{s}}_2,\,\ldots,\,\check{\mathbf{s}}_N\right.\Big]
	\end{align*}
where $\bm{\delta}:=\{\delta_n\}_{n=1}^N$ collects independent random variables drawn from the Rademacher distribution, meaning $\{{\rm Pr}(\delta_n=1)={\rm Pr}(\delta_n=-1)=0.5\}_{n=1}^N$. Further, $\mathbb{E}_{\bm{\delta}}[\cdot]$ and $\mathbb{E}_{\check{\mathbf{s}}}[\cdot]$ denote the expectation with respect to  $\bm{\delta}$ and $\check{\mathbf{s}}$, respectively.
	
	Since $\mathcal{A}(\cdot)$ is a Lipschitz function with Lipschitz constant $1/(RB)$ satisfying $\mathcal{A}(0)=0$, the result in \cite[Th. 12]{bartlett2002rademacher} asserts that
	\begin{equation}\label{eq:hrc}
	\hat{R}_N({\mathcal{H}})\le 2/(RB)\hat{R}_N(\mathcal{F}_B).
	\end{equation} 
	Applying \cite[Th. 4.12]{shawe2004kernel} leads to 
	\begin{equation}\label{eq:fbrc}
	\hat{R}_N(\mathcal{F}_B)\le 2B/N\sqrt{{\rm Tr}(\mathbf{K})}
	\end{equation}
	where the $(i,\,j)$-th entry of $\mathbf{K}\in\mathbb{R}^{N\times N}$ is $\big\langle\bm{\psi}(\check{\mathbf{s}}_i),\bm{\psi}(\check{\mathbf{s}}_j)\big\rangle$, for $i,\,j=1,\ldots,N$. One can also confirm that 
	\begin{equation}\label{eq:tracek}
	\!\!{\rm Tr}(\mathbf{K})=\sum_{n=1}^N\sum_{m=1}^{M-1}\sum_{m'>m}^M \!\!\,\Big[\kappa_m(\check{\mathbf{s}}_n,\check{\mathbf{s}}_n) +\kappa_{m'}(\check{\mathbf{s}}_n,\check{\mathbf{s}}_n)\Big]^2.
	\end{equation}
	Substituting \eqref{eq:fbrc} and \eqref{eq:tracek} to \eqref{eq:hrc} yields
	\begin{equation*}\label{eq:h2rc}
	\hat{R}_N({\mathcal{H}})\!\le\! \frac{4}{RN} \sqrt{\!\sum_{n=1}^N\!\sum_{m=1}^{M-1}\!\sum_{m'>m}^M \!\!\Big[\kappa_m(\check{\mathbf{s}}_n,\check{\mathbf{s}}_n) +\kappa_{m'}(\check{\mathbf{s}}_n,\check{\mathbf{s}}_n)\Big]^2 }.
	\end{equation*}
	Multiplying \eqref{eq:gebh} by $RB$ along with the last equation
%	\eqref{eq:h2rc} 
	gives rise to \eqref{eq:thm}.	
\end{proof}

Theorem \ref{thm:geb} confirms that the empirical expectation of $g(\cdot)$, namely $\bar{g}_N(\check{\mathbf{s}})$, stays close to its ensemble one $\mathbb{E}(g(\check{\mathbf{s}}))$, provided that $\{\|\mathbf{U}_m\|_F\}_{m}$ can be controlled. For this reason, it is prudent to trade off maximization of correlations among the $M$ datasets with the norms of the resultant loading vectors.
\section{Graph-regularized Dual MCCA}
In practical scenarios involving high-dimensional data vectors with dimensions satisfying $\min_m D_m > N$, the matrices $\{\mathbf{X}_m\mathbf{X}_m^\top\}$ become singular -- a case where GMCCA in \eqref{eq:gmcca} does not apply. For such cases, consider rewriting the $D_m\times d$ loading matrices $\mathbf{U}_m$ in terms of the data matrices $\mathbf{X}_m$ as $\mathbf{U}_m =\mathbf{X}_m\mathbf{A}_m$, where  $\mathbf{A}_m\in\mathbb{R}^{N\times d}$ will be henceforth termed the dual of $\mathbf{U}_m$. Replacing $\mathbf{U}_m$ with $\mathbf{X}_m\mathbf{A}_m$ in the linear GMCCA formulation \eqref{eq:gmcca} leads to its dual formulation
\begin{subequations}
	\label{eq:gmccad}
	\begin{align}
	\min_{\{\mathbf{A}_m\},\,\mathbf{S}}~&~\sum_{m=1}^M\left\|\mathbf{A}_m^\top\mathbf{X}_m^\top\mathbf{X}_m-\mathbf{S}\right\|_F^2+\gamma{\rm Tr}\left(\mathbf{S}\mathbf{L}_{\mathcal{G}}\mathbf{S}^\top\right)\label{eq:gmccadcos}\\
	{\rm s. ~to}\quad&~~\mathbf{S}\mathbf{S}^\top=\mathbf{I}.\label{eq:gmccadcon}
	\end{align}
\end{subequations}
If the $N\times N$ matrices $\{\mathbf{X}_m^\top\mathbf{X}_m\}_{m=1}^M$ are nonsingular, it can be readily confirmed that the $d\leq \rho$ columns of the optimizer $\hat{\mathbf{S}}^\top$ of \eqref{eq:gmccad} are the $d$ principal eigenvectors of $M\mathbf{I}-\gamma\mathbf{L}_{\mathcal{G}}$, while the dual matrices can be estimated in closed form as  $\hat{\mathbf{A}}_m=(\mathbf{X}_m^\top\mathbf{X}_m)^{-1}\hat{\mathbf{S}}^\top$.
Clearly, such an $\hat{\mathbf{S}}$ does not depend on the data $\{\mathbf{X}_m\}_{m=1}^M$, and this estimate goes against our goal of extracting $\hat{\mathbf{S}}$ as the latent low-dimensional structure commonly present in $\{\mathbf{X}_m\}_{m=1}^M$. To address this issue, we mimic the dual CCA trick (see e.g., \cite{hardoon2004canonical}), and introduce a Tikhonov regularization term on the loading vectors through the norms of $\left\{\|\mathbf{U}_m\|_F^2={\rm Tr}\left(\mathbf{A}_m^\top\mathbf{X}_m^\top\mathbf{X}_m\mathbf{A}_m\right)\right\}$. This indeed agrees with the observation we made following Theorem \ref{thm:geb} that controlling $\{\|\mathbf{U}_m\|_F^2\}$ improves the generalization. In a nutshell, our graph-regularized dual (GD) MCCA is given as
\begin{subequations}
	\label{eq:gmccadr}
	\begin{align}
	\underset{\{\mathbf{A}_m\},\mathbf{S}}\min~&~~\sum_{m=1}^M\left\|\mathbf{A}_m^\top\mathbf{X}_m^\top\mathbf{X}_m-\mathbf{S}\right\|_F^2+\gamma{\rm Tr}\left(\mathbf{S}\mathbf{L}_{\mathcal{G}}\mathbf{S}^\top\right)\nonumber\\
	& ~+\sum_{m=1}^M \epsilon_m {\rm Tr}\left(\mathbf{A}_m^\top\mathbf{X}_m^\top\mathbf{X}_m\mathbf{A}_m\right)\label{eq:gmccadrcos}\\
	{\rm s. ~to}~~&~~\mathbf{S}\mathbf{S}^\top=\mathbf{I}.\label{eq:gmccadrcon}
	\end{align}
\end{subequations}
where $\{\epsilon_m\ge 0\}$ denote pre-selected weight coefficients. 

As far as the solution is concerned, it can be deduced that the $i$-th column of the optimizer $\hat{\mathbf{S}}$ of \eqref{eq:gmccadr} is the eigenvector of $\mathbf{C}_d:=\sum_{m=1}^M (\mathbf{X}_m^\top\mathbf{X}_m + \epsilon \mathbf{I})^{-1}-\gamma\mathbf{L}_{\mathcal{G}}$ associated with the $i$-th largest eigenvalue. Once $\hat{\mathbf{S}}$ is found, the optimal dual matrices can be obtained as $\{\hat{\mathbf{A}}_m=(\mathbf{X}_m^\top\mathbf{X}_m+\epsilon\mathbf{I})^{-1}\hat{\mathbf{S}}^\top\}_{m=1}^M$. The steps of implementing GDMCCA are summarized in Alg. \ref{alg:gdmcca}.

\begin{algorithm}[t]
	\caption{Graph-regularized  dual MCCA.}
	\label{alg:gdmcca}
	\begin{algorithmic}[1]
		\STATE {\bfseries Input:} $\{\mathbf{X}_m\}_{m=1}^M$, $\epsilon$, $\gamma$, and $\mathbf{W}$.
		\STATE {\bfseries Build} $\mathbf{L}_{\mathcal{G}}$ using \eqref{eq:lg}.
		\STATE {\bfseries Construct} $\mathbf{C}_d=\sum_{m=1}^M \left(\mathbf{X}_m^\top\mathbf{X}_m + \epsilon \mathbf{I}\right)^{-1}-\gamma\mathbf{L}_{\mathcal{G}}$.
		\STATE {\bfseries Perform} \label{step:4} eigenvalue decomposition
		on $\mathbf{C}_d$ to obtain the $d$ eigenvectors associated with the $d$ largest eigenvalues, which are collected as columns of $\hat{\mathbf{S}}^\top$.
		\STATE{\bfseries Compute} $\{\hat{\mathbf{A}}_m=\left(\mathbf{X}_m^\top\mathbf{X}_m+\epsilon\mathbf{I}\right)^{-1}\hat{\mathbf{S}}^\top\}_{m=1}^M$.
		\STATE {\bfseries Output:} $\{\hat{\mathbf{A}}_m\}_{m=1}^M$ and $\hat{\mathbf{S}}$.
		\vspace{-0pt}
	\end{algorithmic}
\end{algorithm} 

\section{Graph-regularized Kernel MCCA}
The GMCCA and GDMCCA approaches are limited to analyzing linear data dependencies. Nonetheless, complex nonlinear data dependencies are not rare in practice. To account for nonlinear dependencies, a graph-regularized kernel (GK) MCCA formulation is pursued in this section to capture the nonlinear relationships in the $M$ datasets $\{\mathbf{X}_m\}_m$ through kernel-based methods. Specifically, the idea of GKMCCA involves first mapping the data vectors $\{\mathbf{X}_m \}_m$ to higher (possibly infinite) dimensional feature vectors by means of $M$ nonlinear functions, on which features we will apply GMCCA to find the shared low-dimensional canonical variables. 

Let $\bm{\phi}_m$ be a mapping from $\mathbb{R}^{D_m}$ to $\mathbb{R}^{L_m}$ for all $m$, where the dimension $L_m$ can be as high as infinity. Clearly, the data enter the GDMCCA problem \eqref{eq:gmccadr} only via the similarity matrix $\mathbf{X}_m^\top\mathbf{X}_m$. Upon mapping all data vectors $\{\mathbf{x}_{m,i}\}_{i=1}^N$ into $\{\bm{\phi}_m(\mathbf{x}_{m,i})\}_{i=1}^N$, the linear similarities $\{\langle \mathbf{x}_{m,i},\,\mathbf{x}_{m,j}\rangle\}_{i,j=1}^N$ can be replaced with the mapped nonlinear similarities $\{\langle \bm{\phi}_m(\mathbf{x}_{m,i}),\, \bm{\phi}_m(\mathbf{x}_{m,j})\rangle \}_{i,j=1}^N$. After selecting some kernel function $\kappa^m$ such that $\kappa^m(\mathbf{x}_{m,i},\, \mathbf{x}_{m,j}):=\langle \bm{\phi}_m(\mathbf{x}_{m,i}), \, \bm{\phi}_m(\mathbf{x}_{m,j})\rangle$, the $(i,\, j)$-th entry of the kernel matrix $\bar{\mathbf{K}}_m\in\mathbb{R}^{N\times N}$ is given by $\kappa^m(\mathbf{x}_{m,i},\, \mathbf{x}_{m,j})$, for all $i$, $j$, and $m$. In the sequel, centering $\{\bm{\phi}_m(\mathbf{x}_{m,i})\}_{i=1}^N$ is realized by centering the kernel matrix for data $\mathbf{X}_{m}$ as
\begin{align}
\label{eq:km}
\mathbf{K}_m(i,\,j):=&\,\bar{\mathbf{K}}_m(i,j)-\frac{1}{N}\sum_{k=1}^N \bar{\mathbf{K}}_m(k,j)-\frac{1}{N} \sum_{k=1}^N \bar{\mathbf{K}}_m(i,k)\nonumber\\
&+\frac{1}{N^2} \sum_{i,j=1}^N \bar{\mathbf{K}}_m(i,j)
\end{align}
for $m=1,\ldots,M$. 

Replacing $\{\mathbf{X}_m^\top\mathbf{X}_m\}_m$ in the GDMCCA formulation \eqref{eq:gmccadr} with centered kernel matrices $\{\mathbf{K}_m\}_m$ yields our GKMCCA
\begin{subequations}
	\label{eq:gmccak}
	\begin{align}
	\underset{\{\mathbf{A}_m\},\mathbf{S}}\min&\quad \sum_{m=1}^M\left\|\mathbf{A}_m^\top\mathbf{K}_m-\mathbf{S}\right\|_F^2+\gamma{\rm Tr}\left(\mathbf{S}\mathbf{L}_{\mathcal{G}}\mathbf{S}^\top\right)\nonumber\\
	&\quad +\epsilon\sum_{m=1}^M  {\rm Tr}\left(\mathbf{A}_m^\top\mathbf{K}_m\mathbf{A}_m\right)\label{eq:gmccakcos}\\
	{\rm s. ~to}~&\quad \mathbf{S}\mathbf{S}^\top=\mathbf{I}.\label{eq:gmccakcon}
	\end{align}
\end{subequations} 

Selecting invertible matrices $\{\mathbf{K}_m\}_{m=1}^M$, and following the logic used to solve \eqref{eq:gmccadr}, we can likewise tackle \eqref{eq:gmccak}.  Consequently, the columns of the optimizer $\hat{\mathbf{S}}^\top$ are the first $d$ principal eigenvectors of $\mathbf{C}_g:= \sum_{m=1}^M  (\mathbf{K}_m+\epsilon \mathbf{I})^{-1}\mathbf{K}_m - \gamma\mathbf{L}_{\mathcal{G}} \in\mathbb{R}^{N\times N}$, and the optimal  $\hat{\mathbf{A}}_m$ sought can be obtained as $\hat{\mathbf{A}}_m=(\mathbf{K}_m+\epsilon\mathbf{I})^{-1}\hat{\mathbf{S}}^\top$. For implementation, GKMCCA is presented in step-by-step form as Algorithm \ref{alg:gkmcca}.
\begin{algorithm}[t]
	\caption{Graph-regularized kernel MCCA.}
	\label{alg:gkmcca}
	\begin{algorithmic}[1]
		\STATE {\bfseries Input:} $\{\mathbf{X}_m\}_{m=1}^M$, $\epsilon$, $\gamma$, $\mathbf{W}$, and $\{\kappa^m\}_{m=1}^M$.
		\STATE {\bfseries Construct} $\{\mathbf{K}_m\}_{m=1}^M$ using \eqref{eq:km}.
		\STATE {\bfseries Build} $\mathbf{L}_{\mathcal{G}}$ using \eqref{eq:lg}.
		\STATE {\bfseries Form} $\mathbf{C}_g= \sum_{m=1}^M  \left(\mathbf{K}_m+\epsilon \mathbf{I}\right)^{-1}\mathbf{K}_m - \gamma\mathbf{L}_{\mathcal{G}}$.
		\STATE {\bfseries Perform} \label{step:4} eigendecomposition
		on $\mathbf{C}_g$ to obtain the $d$ eigenvectors associated with the $d$ largest eigenvalues, which are collected as columns of $\hat{\mathbf{S}}^\top$.
		\STATE{\bfseries Compute} $\{\hat{\mathbf{A}}_m=\left(\mathbf{K}_m+\epsilon\mathbf{I}\right)^{-1}\hat{\mathbf{S}}^\top\}_{m=1}^M$.
		\STATE {\bfseries Output:} $\{\hat{\mathbf{A}}_m\}_{m=1}^M$ and $\hat{\mathbf{S}}$.
		\vspace{-0pt}
	\end{algorithmic}
\end{algorithm} 

In terms of computational complexity, recall that GMCCA, GDMCCA, GKMCCA, MCCA, DMCCA, and KMCCA all require finding the eigenvectors of matrices with different dimensionalities. Defining $D:=\max_{m} D_m$, it can be checked that 
they incur correspondingly complexities $\mathcal{O}(N^2{\rm max}(N,DM))$, $\mathcal{O}(N^2DM)$, $\mathcal{O}(N^2M\max(N,D))$, $\mathcal{O}(N^2{\rm max}(N,DM))$, $\mathcal{O}(N^2DM)$, and $\mathcal{O}(N^2M\max(N,D))$. 
Interestingly, introducing graph-regularization to e.g., MCCA, DMCCA, as well as KMCCA does not result in an increase of computational complexity. When $\{N\ll D_m\}_{m=1}^M$, GMCCA in its present form is not feasible, or suboptimal even though pseudo-inverse can be utilized at the cost of $\mathcal{O}(MD^3)$. In contrast, GDMCCA is computationally preferable as its cost grows only linearly with $D$. When $N\gg D$, the complexity of GKMCCA is dominated by the computation burden of $\{(\mathbf{K}_m+\epsilon\mathbf{I})^{-1}\mathbf{K}_m\}_{m=1}^M$
requiring complexity in the order of $\mathcal{O}(N^3M)$. On the other hand, implementing GKMCCA when $N\ll D$ incurs complexity of order $\mathcal{O}(N^2MD)$, required to evaluate the $M$ kernel matrices.

\begin{remark}
	When the (non)linear maps $\bm{ \phi}_m(\cdot)$ needed to form the kernel matrices $\{\mathbf{K}_m\}_{m=1}^M$ in \eqref{eq:gmccak} are not given a priori, the multi-kernel methods are well motivated (see e.g., \cite{zhang2017going,2018scgmkl}). Concretely, one presumes that each $\mathbf{K}_m$ is a linear combination of $P$ kernel matrices, namely $\mathbf{K}_m=\sum_{p=1}^P\beta_m^p \mathbf{K}_{m}^p$, where $\{\mathbf{K}_m^p\}_{p=1}^P$ represent preselected view-specific kernel matrices for data $\mathbf{X}_m$. The unknown coefficients $\{\beta_m^p\ge 0\}_{m,p}$ are then jointly optimized with $\{\mathbf{A}_m\}_{m}$ and $\mathbf{S}$ in \eqref{eq:gmccak}. 
\end{remark}

\begin{remark}
	When more than one type of connectivity information on the common sources are available, our single graph-regularized MCCA schemes can be generalized to accommodate multiple or multi-layer graphs. Specifically, the single graph-based regularization term $\gamma{\rm Tr}(\mathbf{S}\mathbf{L}_{\mathcal{G}}\mathbf{S}^\top)$ in \eqref{eq:gmcca}, \eqref{eq:gmccadr}, and \eqref{eq:gmccak} can be replaced with $\sum_{i=1}^I\gamma_i {\rm Tr}(\mathbf{S}\mathbf{L}_{\mathcal{G}i}\mathbf{S}^\top)$ with possibly unknown yet learnable coefficients $\{\gamma_i \}_{i}$, where $\mathbf{L}_{\mathcal{G}i}$ denotes the graph Laplacian matrix of the $i$-th graph, for $i=1,\ldots,I$.
\end{remark}

%%%%%%%%%%%%%%%%%%%%%%%%%%%%%%%%%%%%%%%%%%%%%%%%%%%%%%%%%%%%%%%%%%%%%%%%%%%%%%%%%%
%%%%%%%%%%%%%%%%%%%%%%%%%%%%%%%%%%%%%%%%%%%%%%%%%%%%%%%%%%%%%%%%%%%%%%%%%%%%%%%%%%
%%%%%%%%%%%%%%%%%%%%%%%%%%%%%%%%%%%%%%%%%%%%%%%%%%%%%%%%%%%%%%%%%%%%%%%%%%%%%%%%%%
\section{Numerical Tests}
In this section, numerical tests using real datasets are provided to showcase the merits of our proposed MCCA approaches in several machine learning applications, including user engagement prediction, friend recommendation, clustering, and classification.

\subsection{User engagement prediction}\label{sec:uep}
Given multi-view data of Twitter users, the goal of the so-called user engagement prediction is to determine which topics a Twitter user is likely to tweet about, by using hashtag as a proxy. The first experiment entails six datasets of Twitter users, which include EgoTweets, MentionTweets, FriendTweets, FollowersTweets, FriendNetwork, and FollowerNetwork data \footnote{Downloaded from http://www.dredze.com/datasets/multiviewembeddings/.},  where $\{D_m=1,000\}_{m=1}^6$ and $N=1,770$ users' data are randomly chosen from the database. Details in generating those multiview data can be found in \cite{tweeterdata}. Based on data $\{\mathbf{X}_m\in \mathbb{R}^{D_m\times N}\}_{m=1}^3$ from the first $3$ views, three adjacency matrices $\{\mathbf{W}_m\in\mathbb{R}^{N\times N}\}_{m=1}^3$ are constructed, whose $(i,\,j)$-th entries are 
\begin{equation}\label{eq:wij}
w_{ij}^m:=\left\{
\begin{array}{ll}
\mathbf{K}^t_m(i,\,j), & i\in\mathcal{N}_{k_1}(j) {~\rm or~}j\in\mathcal{N}_{k_1}(i)\\
0,& {\rm otherwise}
\end{array}
\right.
\end{equation}
 where $\mathbf{K}_m^t$ is a Gaussian kernel matrix of $\mathbf{X}_m$ with bandwidth equal to the mean of the corresponding Euclidean distances, and $\mathcal{N}_{k_1}(j)$ the set of column indices of $\mathbf{K}_m^t$ containing the $k_1$-nearest neighbors of column $j$. Our graph adjacency matrix is built using $\mathbf{W}=\sum_{m=1}^3\mathbf{W}_m$. 
 To perform graph (G) PCA \cite{gpca} and PCA, six different views of the data are concatenated to form a single dataset of $6,000$-dimensional data vectors. 

\begin{figure}[t]
	\centering 
	\includegraphics[scale=0.58]{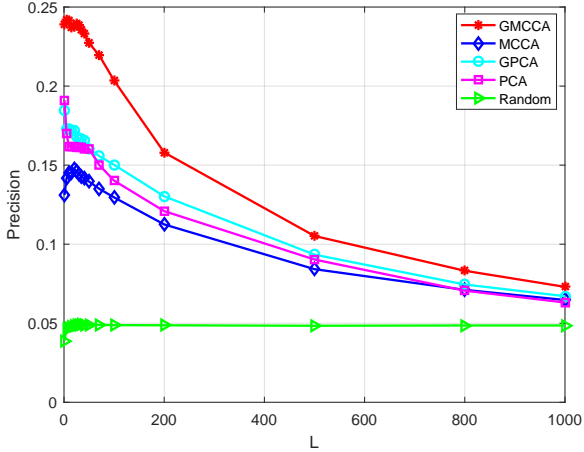}  
	\vspace{-5pt}
	\caption{\small{Precision of user engagement prediction.}}
	\label{fig:ht_precision}
	\vspace{-5pt}
\end{figure}

\begin{figure}[t]
	\centering 
	\includegraphics[scale=0.58]{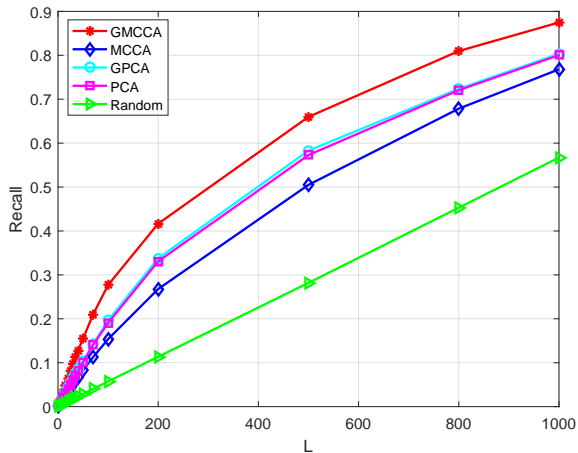}  
	\vspace{-5pt}
	\caption{\small{Recall of user engagement prediction.}}
	\label{fig:ht_recall}
	\vspace{-5pt}
\end{figure}

We selected $9$ most frequently used hashtags. Per Monte Carlo (MC) run, $5$ users who tweeted each selected hashtag were randomly chosen as exemplars of users that would employ this hashtag in the future. All other users that tweeted each hashtag were ranked by the cosine distance of their representations to the average representation of those $5$ users, where the representation per user is either the corresponding estimate of the common source obtained by (G)MCCA or the principal components by (G)PCA. Before computing cosine distance, the $d$-dimensional representations were z-score normalized. In other words,  each dimension has its mean removed, and subsequently scaled to have unit variance. The representations are learned on data collected pre-March 2015, while the association between hashtags and users is extracted in March 2015. This implies that the hashtags do not impact the representation learning. Pertinent hyper-parameters were set as $k_1=10$, $\gamma=0.05$, and $d=5$.

Prediction performance is evaluated using three metrics: precision, recall, and mean reciprocal rank (MRR), where a user is marked as correct if this user uses the hashtag. The precision is defined as the ratio of the number of correctly predicted users over the total number of predicted users considered. Recall is the ratio of the number of correctly predicted users over the total number of users that use the hashtag. MRR is the average inverse of the ranks of the first correctly predicted users. 

Figures \ref{fig:ht_precision} and \ref{fig:ht_recall} present the average precision and recall of GMCCA, MCCA, GPCA, PCA, and a random ranking scheme over $100$ MC realizations, with a varying number $L$ of evaluated users per hashtag. Here, the random ranking is included as a baseline. Table \ref{tab:ht} reports the prediction performance of simulated schemes with $L=35$ being fixed. Clearly, GMCCA outperforms its competing alternatives in this Tweeter user engagement prediction task. Moreover, ranking through all approaches is consistent across precision, recall, and MRR. 

\renewcommand{\arraystretch}{2} 
\begin{table}[tp]	
	\centering
	\fontsize{9.5}{9}\selectfont
	\caption{User engagement prediction performance.}
	\vspace{-3pt}
	\label{tab:ht}
	\vspace{.8em}
	\begin{tabular}{|c|c|c|c|c|}
		%		\hline
		\hline
		Model&Precision&Recall&MRR\cr
		\hline
		\hline
		GMCCA&0.2357&0.1127&0.4163\cr\hline
		MCCA&0.1428&0.0593&0.2880\cr\hline
		GPCA&0.1664&0.0761&0.3481\cr\hline
		PCA&0.1614&0.0705&0.3481\cr
		\hline
		%		\hline
		Random&0.0496&0.0202&0.1396\cr
		\hline
		%		\hline
	\end{tabular}
\end{table}

\subsection{Friend recommendation}
GMCCA is further examined for friend recommendation, where the graph can be constructed from an alternative view of the data, as we argued in Remark \ref{rmk:graphcca}. Specifically for this test, $3$ Tweeter user datasets \cite{tweeterdata} from $2,506$ users were used to form $\{\mathbf{X}_m\in\mathbb{R}^{1,000}\}_{m=1}^3$, which are EgoTweets, FollowersTweets, and FollowerNetwork data. An alternative view, the FriendTweets data of the same group of users, was used to construct the common source graph. The weight matrix $\mathbf{W}$ is obtained following a similar way to form $\mathbf{W}_m$ but replacing $\mathbf{K}_m^t$ in \eqref{eq:wij} with a Gaussian kernel matrix of FriendTweets data.

In the experiment, $20$ most popular accounts were selected, which correspond to celebrities. Per realization, $10$ users who follow each celebrity were randomly picked, and all other users were ranked by their cosine distances to the average of the $10$ picked representations. We z-score normalize all representations before calculating the cosine distances. The same set of evaluation criteria as in user engagement prediction in Sec. \ref{sec:uep} was adopted here, where a user is considered to be a correctly recommended friend if both follow the given celebrity. Hyper-parameters $k_1=50$, $\gamma=0.05$, and $d=5$ were simulated. 
The friend recommendation performance of GMCCA, MCCA, GPCA, PCA, and Random ranking is evaluated after averaging over $100$ independent realizations.

In Figs. \ref{fig:fr_precision} and \ref{fig:fr_recall}, the precision and recall of all simulated algorithms under an increasing number of recommended friends ($L$) are reported. Plots corroborate the advantages of our GMCCA relative to its simulated alternatives under different numbers of recommendations. Moreover, Table \ref{tab:fr} compares the precision, recall, and MRR of simulated schemes for fixed $L=100$. Regarding the results, we have the following observations: i) GMCCA is more attractive in the recommendation task than its alternatives; ii) precision and recall differences among approaches are consistent for different $L$ values; and, iii) ranking achieved by these schemes is consistent across $3$ metrics for fixed $L=100$.

\begin{figure}[t]
	\centering 
	\includegraphics[scale=0.58]{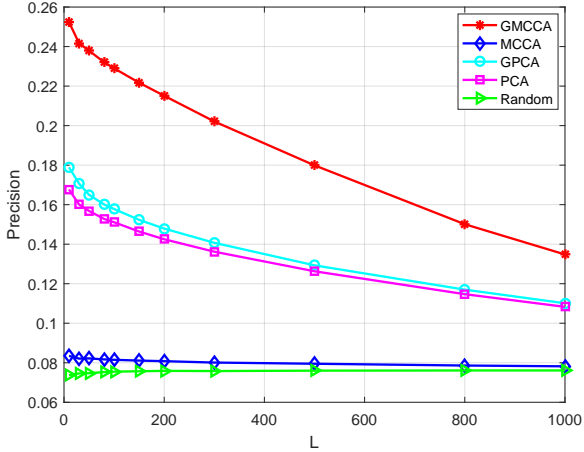}  
	\vspace{-6pt}
	\caption{\small{Precision of friend recommendation.}}
	\label{fig:fr_precision}
	\vspace{-5pt}
\end{figure}

\begin{figure}[t]
	\centering 
	\includegraphics[scale=0.58]{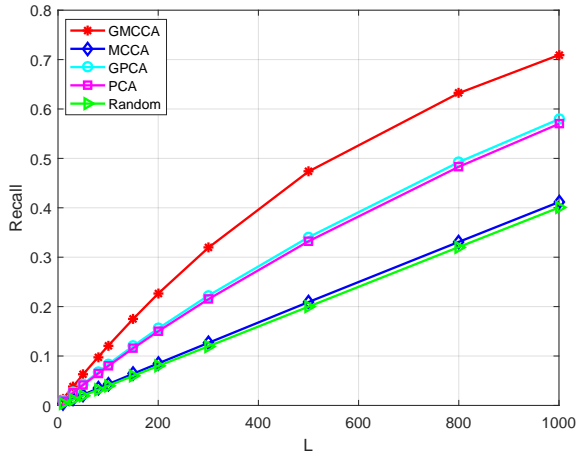}  
	\vspace{-6pt}
	\caption{\small{Recall of friend recommendation.}}
	\label{fig:fr_recall}
	\vspace{-5pt}
\end{figure}

\renewcommand{\arraystretch}{2} 
\begin{table}[tp]	
	\centering
	\fontsize{9.5}{9}\selectfont
	\caption{Friend recommendation performance comparison.}
	\label{tab:fr}
	\vspace{.8em}
	\begin{tabular}{|c|c|c|c|c|}
		%		\hline
		\hline
		Model&Precision&Recall&MRR\cr
		\hline
		\hline
		GMCCA&0.2290&0.1206&0.4471\cr\hline
		MCCA&0.0815&0.0429&0.2225\cr\hline
		GPCA&0.1578&0.0831&0.3649\cr\hline
		PCA&0.1511&0.0795&0.3450\cr
		\hline
		%		\hline
		Random&0.0755&0.0397&0.2100\cr
		\hline
		%		\hline
	\end{tabular}
	\vspace{-10pt}
\end{table}

\begin{figure*}[th!]
	\centering
	\hspace{0.1in}
	\subfigure[GMCCA]{
		\label{fig:subfig:a} %% label for first subfigure
		\includegraphics[width=4.22cm,height=3.84cm]{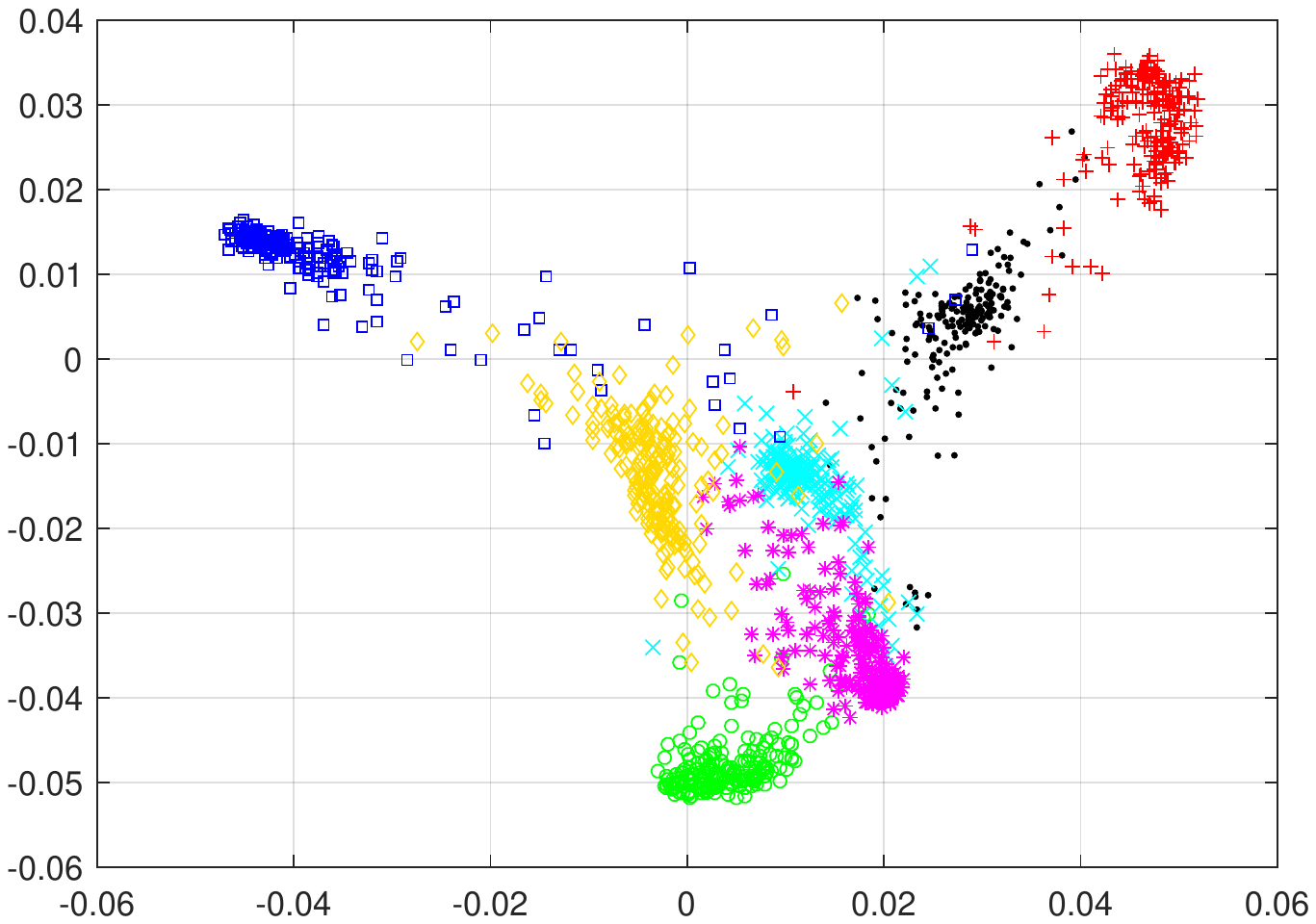}}
	%	\hspace{0.3in}
	\subfigure[MCCA]{
		\label{fig:subfig:b} %% label for second subfigure
		\includegraphics[width=4.22cm,height=3.84cm]{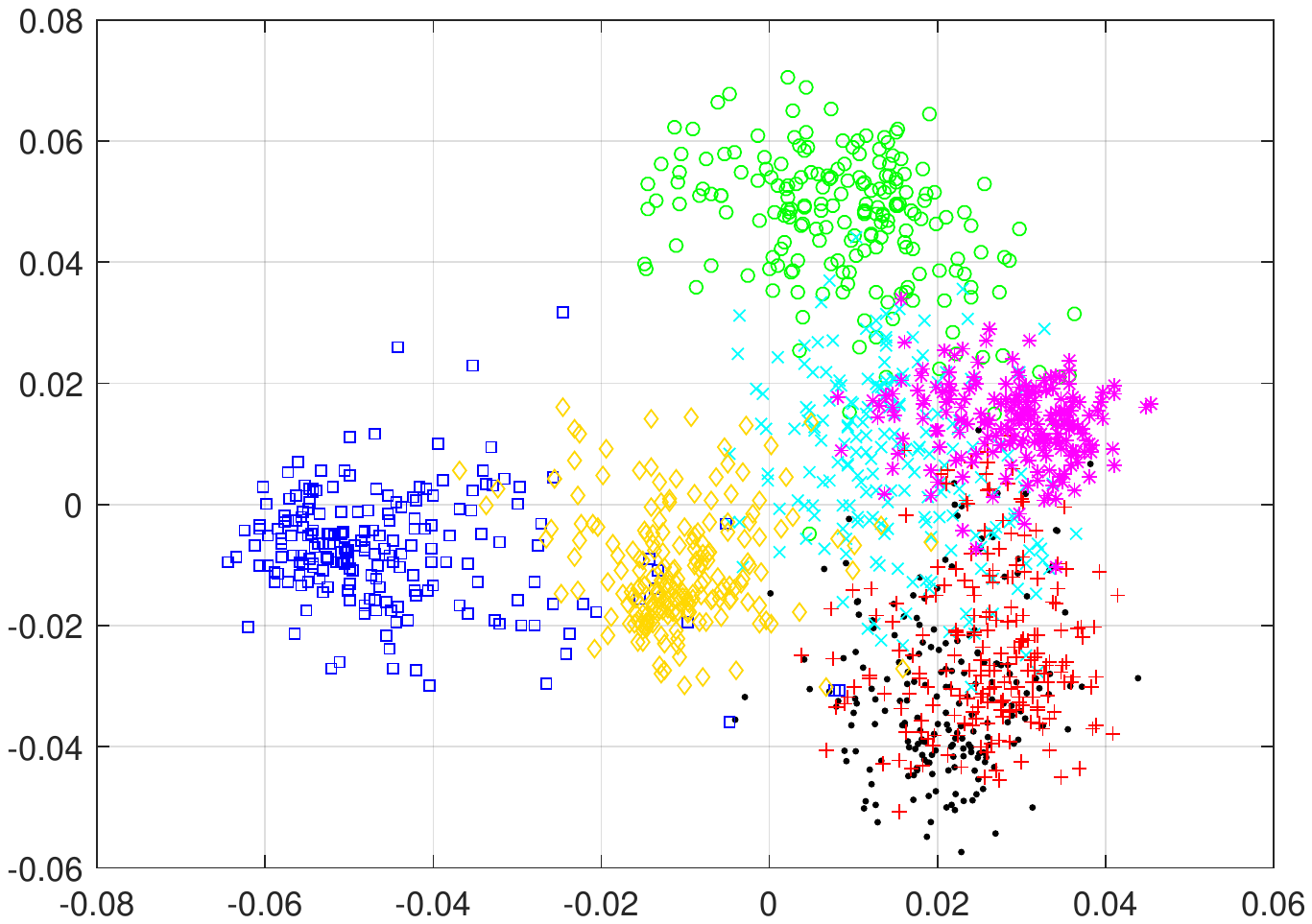}}
	%	\hspace{-0.5in}
	\subfigure[GPCA]{
		\label{fig:subfig:c} %% label for second subfigure
		\includegraphics[width=4.22cm,height=3.84cm]{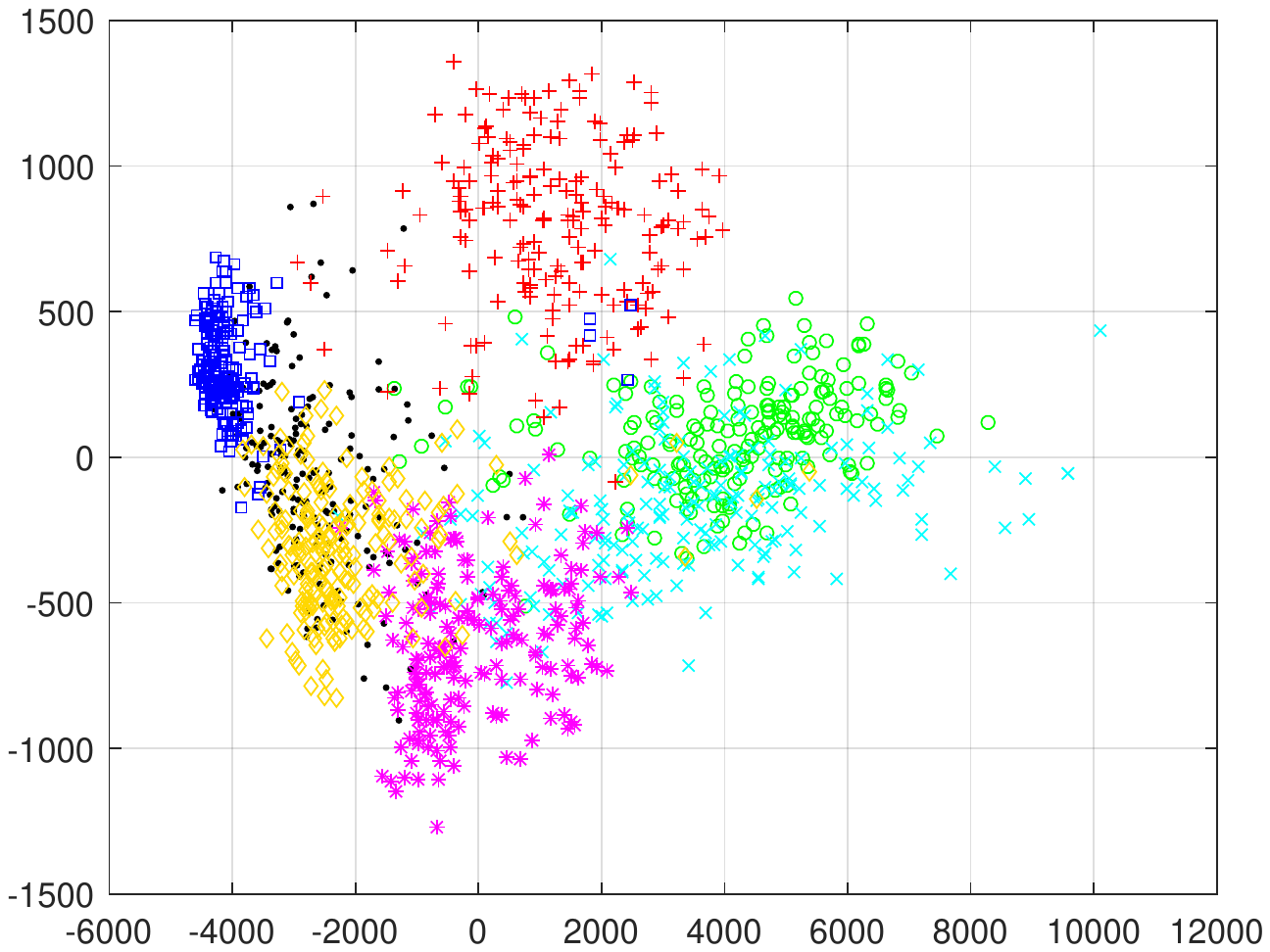}}
	%	\hspace{-0.5in}
	\subfigure[PCA]{
		\label{fig:subfig:c} %% label for second subfigure
		\includegraphics[width=4.22cm,height=3.84cm]{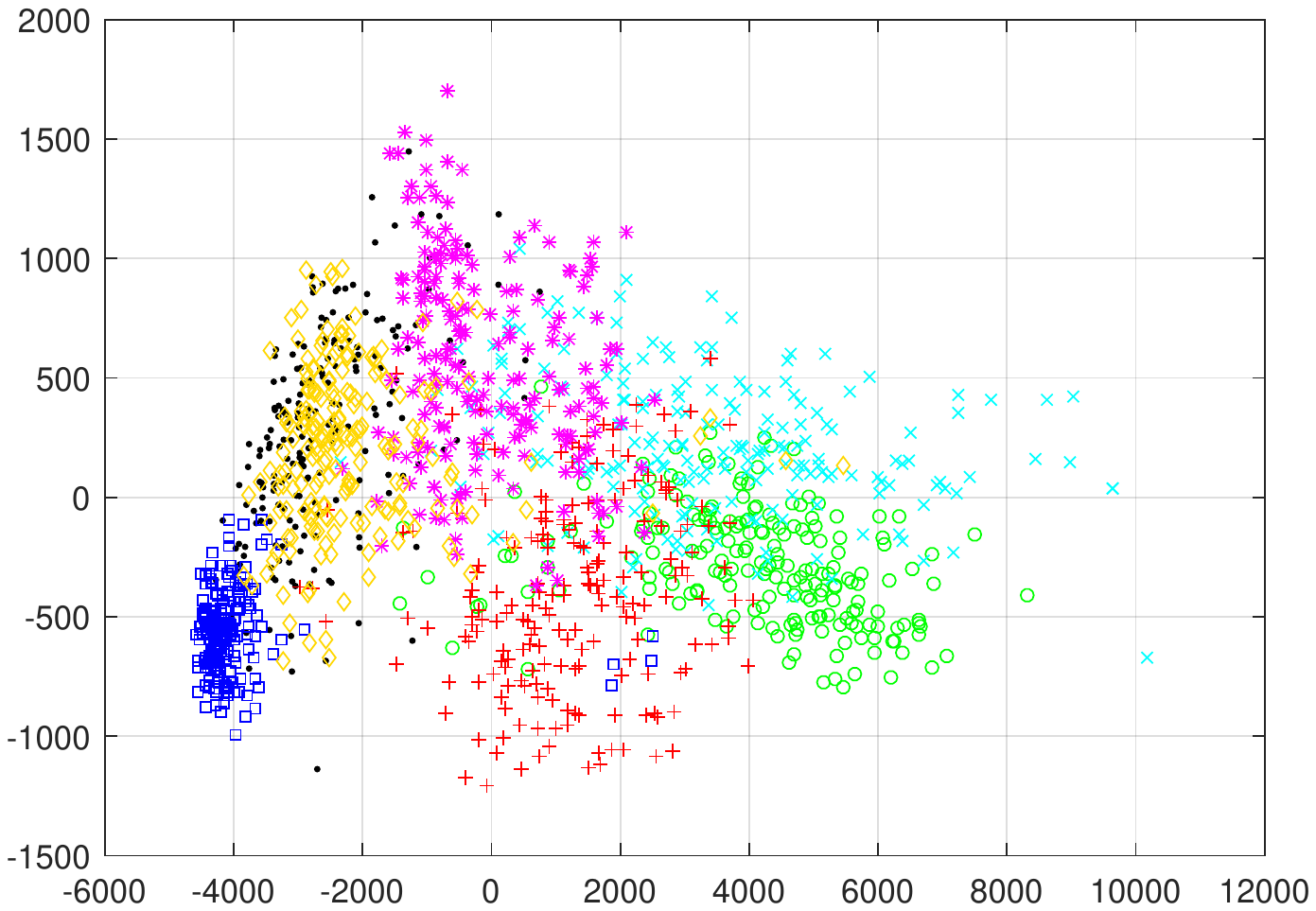}}
	\caption{Scatter plot of the first two rows of $\hat{\mathbf{S}}$ or principal components.}
	\label{fig:subfig} %% label for entire figure
\end{figure*}

\subsection{UCI data Clustering}\label{simu:uci}
Handwritten digit data from the UCI machine learning repository\footnote{http://archive.ics.uci.edu/ml/datasets/Multiple+Features.}
were called for to assess GMCCA for clustering. This dataset contains $6$ feature sets of $10$ classes corresponding to $10$ digits from $0$ to $9$, as listed in Table \ref{tab:uci}. There are $200$ data per class ($2,000$ in total) per feature set. 
Seven clusters of data including digits $1,\,2,\,3,\,4,\,7,\,8$, and $9$ were used to form the views $\{\mathbf{X}_m\in\mathbb{R}^{D_m\times 1,400}\}_{m=1}^6$ with $D_1=76$, $D_2=216$, $D_3=64$, $D_4=240$, $D_5=47$, and $D_6=6$. The graph adjacency matrix is constructed using \eqref{eq:wij}, after substituting $\mathbf{K}_m^t$ by the Gaussian kernel matrix of $\mathbf{X}_3$. GPCA and PCA were performed on the concatenated data vectors of dimension $\sum_{m=1}^6D_m$, while the 
 K-means was performed using either $\hat{\mathbf{S}}$, or the principal components with $\gamma=0.1$ and $d=3$. 

Clustering performance is evaluated in terms of two metrics, namely clustering accuracy and scatter ratio. Clustering accuracy is the percentage of correctly clustered samples. Scatter ratio is defined as $C_t/\sum_{i=1}^7 C_i$, where $C_t$ and $C_i$ denote the total scatter value and the within-cluster scatter value, given correspondingly by $C_t:=\|\hat{\mathbf{S}}\|_F^2$ and $C_i:=\sum_{j\in\mathcal{C}_i}\|\hat{\mathbf{s}}_j-\frac{1}{|\mathcal{C}_i|}\sum_{\ell\in\mathcal{C}_i}\hat{\mathbf{s}}_{\ell}\|_2^2$; here, $\mathcal{C}_i$ is the set of data vectors belonging to the $i$-th cluster, and $|\mathcal{C}_i|$ is the cardinality of $\mathcal{C}_i$. 

Table \ref{tab:cluster} reports the clustering performance of MCCA, PCA, GMCCA, and GPCA for different $k_1$ values. Clearly, GMCCA yields the highest clustering accuracy and scatter ratio. Fixing $k_1=50$, Fig. \ref{fig:subfig} plots the first two dimensions of the common source estimates obtained by (G)MCCA along with the first two principal components of (G)PCA, with different colors signifying different clusters. As observed from the scatter plots, GMCCA separates the $7$ clusters the best, in the sense that data points within clusters are concentrated but across clusters are far apart. 

\begin{table}
	\centering
	\fontsize{9.5}{8}\selectfont
	\caption{Six sets of features of handwritten numerals.}
	\label{tab:uci}
	\begin{tabular}{ll}
		%		\hline
		\hline
		mfeat-fou&76-dim. Fourier coeff. of character shapes features\\ 
		mfeat-fac&216-dim. profile correlations features\\
		mfeat-kar&64-dim. Karhunen-Love coefficients features\\
		mfeat-pix&240-dim. pixel averages in 2 x 3 windows\\
		mfeat-zer&47-dim. Zernike moments features\\
		mfeat-mor&6-dim. morphological features\\
		%		\hline
		\hline
	\end{tabular}
\end{table}

\renewcommand{\arraystretch}{2} 
\begin{table}[tp]	
	\centering
	\fontsize{9.5}{9}\selectfont
	\caption{Clustering performance comparison.}
	\label{tab:cluster}
	\vspace{.5em}
	\begin{tabular}{|c|c|c|c|c|}
		\hline
		%		\hline
		\multirow{2}{*}{$k_1$}&
		\multicolumn{2}{c|}{Clustering accuracy}&\multicolumn{2}{c|}{ Scatter ratio}\cr\cline{2-5}
		&GMCCA&GPCA&GMCCA&GPCA\cr
		\hline
		\hline
		10&0.8141&0.5407&9.37148&4.9569\cr\hline
		20&0.8207&0.5405&11.6099&4.9693\cr\hline
		30&0.8359&0.5438&12.2327&4.9868\cr\hline
		40&0.8523&0.5453&12.0851&5.0157\cr\hline
		50&0.8725&0.5444&12.1200&5.0640\cr
		\hline
		\hline
		\multirow{1}{*}{MCCA}&
		\multicolumn{2}{c|}{0.8007}&\multicolumn{2}{c|}{ 5.5145}\cr\cline{1-2}
		\hline
		\multirow{1}{*}{PCA}&
		\multicolumn{2}{c|}{0.5421}&\multicolumn{2}{c|}{ 4.9495}\cr\cline{1-2}
		\hline
		%		\hline
	\end{tabular}
\end{table}

\begin{figure}[h]
	\centering 
	\includegraphics[scale=0.58]{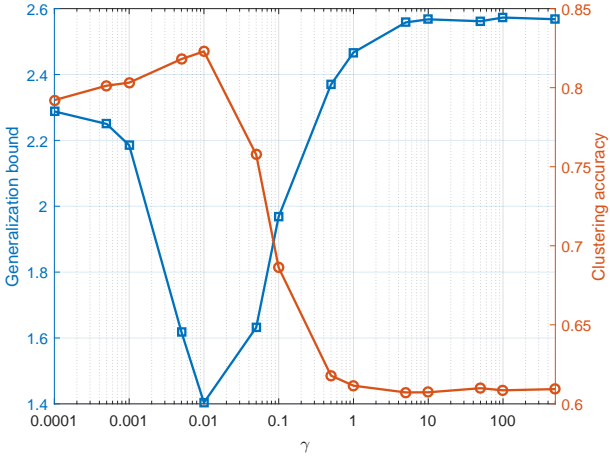}  
\vspace{-6pt}
	\caption{\small{Generalization bound versus $\gamma$.}}
	\label{fig:gb}
	\vspace{-5pt}
\end{figure}

\subsection{Generalization bound versus $\gamma$}
Here, we wish to demonstrate the usefulness of the generalization bound of GMCCA derived in Sec. \ref{sec:gb}. Specifically, we will test numerically the effect of $\gamma$ on the generalization error bound defined on the right hand side (RHS) of \eqref{eq:thm}.

In this experiment, $20$ MC simulations were performed to evaluate the clustering performance of GMCCA using the UCI dataset described in Sec. \ref{simu:uci}. Per MC realization, $200$ samples per cluster were randomly and evenly divided to obtain training data $\{\mathbf{X}_m^{\rm tr}\in\mathbb{R}^{D_m\times 700}\}_{m=1}^3$ 
and testing data $\{\mathbf{X}_m^{\rm te}\in\mathbb{R}^{D_m\times 700}\}_{m=1}^3$. 
The same $7$ digits in Sec. \ref{simu:uci} and their first $3$ views were employed. GMCCA was performed on the training data to obtain $\{\hat{\mathbf{U}}_m\in\mathbb{R}^{D_m\times 3}\}_{m=1}^3$. Subsequently, low-dimensional representations of the testing data were found as $\sum_{m=1}^3\hat{\mathbf{U}}_m^\top \mathbf{X}_m^{\rm te} \in\mathbb{R}^{3\times 700}$, which were fed into the K-means for digit clustering. The generalization bound was evaluated utilizing the RHS of  \eqref{eq:thm}, where $\delta=0.1$, and $B=\sqrt{\sum_{m=1}^2\sum_{m'=m+1}^3 \|\hat{\mathbf{U}}_m^\top\hat{\mathbf{U}}_m+\hat{\mathbf{U}}_{m'}^\top\hat{\mathbf{U}}_{m'}\|_F^2}$.

Figure \ref{fig:gb} depicts the average generalization error bound along with clustering accuracy on the test data for different $\gamma$ values ranging from $0$ to $500$. Interestingly, at $\gamma=0.01$, the bound attains its minimum, and at the same time, the clustering accuracy achieves its maximum. This indeed provides us with an effective way to select the hyper-parameter value for our GMCCA approaches.

%Such an observation gives rise to the possibility of the generalization bound in model selection, i.e., one can choose the value of $\gamma$ corresponding to the minimal bound value.

\subsection{Face recognition}\label{simu:face}
The ability of GDMCCA in face recognition is evaluated using the Extended Yale-B (EYB) face image database \cite{yaleb}. The EYB database contains frontal face images of $38$ individuals, each having $65$ images of $192\times 168$ pixels. Per MC realization, we performed Coiflets, Symlets, and Daubechies orthonormal wavelet transforms on $20$ randomly selected individuals' images to form three feature datasets. Subsequently, three feature matrices of each image were further resized to $50\times 40$ pixels, followed by vectorization to obtain three $2,000\times 1$ vectors. For each individual, $N_{\rm tr}$ images were randomly chosen, and the corresponding three sets of wavelet transformed data were used to form the training datasets $\{\mathbf{X}_m\in\mathbb{R}^{2,000\times20N_{\rm tr} }\}_{m=1}^{3}$. Among the remaining images, $(30-0.5N_{\rm tr})$ images per individual were obtained to form the tuning datasets $\{\mathbf{X}^{\rm tu}_m\in\mathbb{R}^{2,000\times 20(30-0.5N_{\rm tr})}\}_{m=1}^3$, and another $(30-0.5N_{\rm tr})$ for testing $\{\mathbf{X}^{\rm te}_m\in\mathbb{R}^{2,000\times 20(30-0.5N_{\rm tr})}\}_{m=1}^3$, following a similar process to construct $\{\mathbf{X}_m\}_{m=1}^3$. 

The $20N_{\rm tr}$ original training images were resized to $50\times 40$ pixels, and subsequently vectorized to obtain $2,000\times 1$ vectors, collected as columns of $\mathbf{O}\in\mathbb{R}^{2,000\times 20N_{\rm tr}}$, which were further used to build $\mathbf{W}\in\mathbb{R}^{20N_{\rm tr}\times 20N_{\rm tr}}$. Per $(i,\,j)$-th entry of $\mathbf{W}$ is
\begin{equation}\label{eq:wijdual}
w_{ij}:=\left\{\begin{array}{cl}
\frac{\mathbf{o}_i^\top\mathbf{o}_j}{\|\mathbf{o}_i\|_2\|\mathbf{o}_j\|_2}, & i\in\mathcal{M}_{k_2}(j) {~\rm or~}j\in\mathcal{M}_{k_2}(i)\\
0,& {\rm otherwise}
\end{array}
\right.
\end{equation}
where $\mathbf{o}_i$ is the $i$-th column of $\mathbf{O}$, and $\mathcal{M}_{k_2}(i)$ the set of the $k_2$ nearest neighbors of $\mathbf{o}_i$ belonging to the same individual. % in $\mathbf{O}$. 

In this experiment, $k_2=N_{\rm tr}-1$ was kept fixed.  
Furthermore, the three associated graph adjacency matrices in Laplacian regularized multi-view (LM) CCA \cite{blaschko2011semi} were built in a similar way to construct $\mathbf{W}$, after substituting $\mathbf{O}$ by $\{\mathbf{X}_m\}_{m=1}^3$ accordingly. The hyper-parameters in GDMCCA, DMCCA, GDPCA, LMCCA were tuned among $30$ logarithmically spaced values between $10^{-3}$ and $10^3$ to maximize the recognition accuracy on $\{\mathbf{X}^{\rm tu}_m\}_{m=1}^3$. 
After simulating GDMCCA, DMCCA, GDPCA, DPCA, and LMCCA, $10$ projection vectors were employed to find the low-dimensional representations of $\{\mathbf{X}_m^{\rm te}\}_{m=1}^3$. Subsequently, the $1$-nearest neighbor rule was applied for face recognition.

Figures \ref{fig:yaleview1}, \ref{fig:yaleview2}, and \ref{fig:yaleview3} describe the average recognition accuracies of GMDCCA, MDCCA, GDPCA, DPCA, LMCCA, and KNN, for testing data $\mathbf{X}_1^{\rm te}$, $\mathbf{X}_2^{\rm te}$, and $\mathbf{X}_3^{\rm te}$, respectively, and for a varying number $N_{\rm tr}$ of training samples over $30$ MC realizations. It is clear that the recognition performance of all tested schemes improves as $N_{\rm tr}$ grows. Moreover, GDMCCA yields the highest recognition accuracy in all simulated settings.

\begin{figure*}[th!]


	\centering
	\hspace{0.1in}
	\subfigure[1st view]{
		\label{fig:yaleview1} %% label for first subfigure
		\includegraphics[width=5.72cm,height=4.84cm]{yaleview1}}
	%	\hspace{0.3in}
	\subfigure[2nd view]{
		\label{fig:yaleview2} %% label for second subfigure
		\includegraphics[width=5.72cm,height=4.84cm]{yaleview2}}
	%	\hspace{-0.5in}
	\subfigure[3rd view]{
		\label{fig:yaleview3} %% label for second subfigure
		\includegraphics[width=5.72cm,height=4.84cm]{yaleview3}}
	\caption{Classification performance using   YEB data.}
	\label{fig:yale} %% label for entire figure
\end{figure*}

\begin{figure*}[th!]


	\centering
	\hspace{0.1in}
	\subfigure[1st view]{
		\label{fig:mnist_view1} %% label for first subfigure
		\includegraphics[width=5.72cm,height=4.84cm]{mnist_view1}}
	%	\hspace{0.3in}
	\subfigure[2nd view]{
		\label{fig:mnist_view2} %% label for second subfigure
		\includegraphics[width=5.72cm,height=4.84cm]{mnist_view2}}
	%	\hspace{-0.5in}
	\subfigure[3rd view]{
		\label{fig:mnist_view3} %% label for second subfigure
		\includegraphics[width=5.72cm,height=4.84cm]{mnist_view3}}
	\caption{Classification results using MNIST data.}
	\label{fig:mnist} %% label for entire figure
\end{figure*}

%\begin{figure}[t]
%	\centering 
%	\includegraphics[scale=0.58]{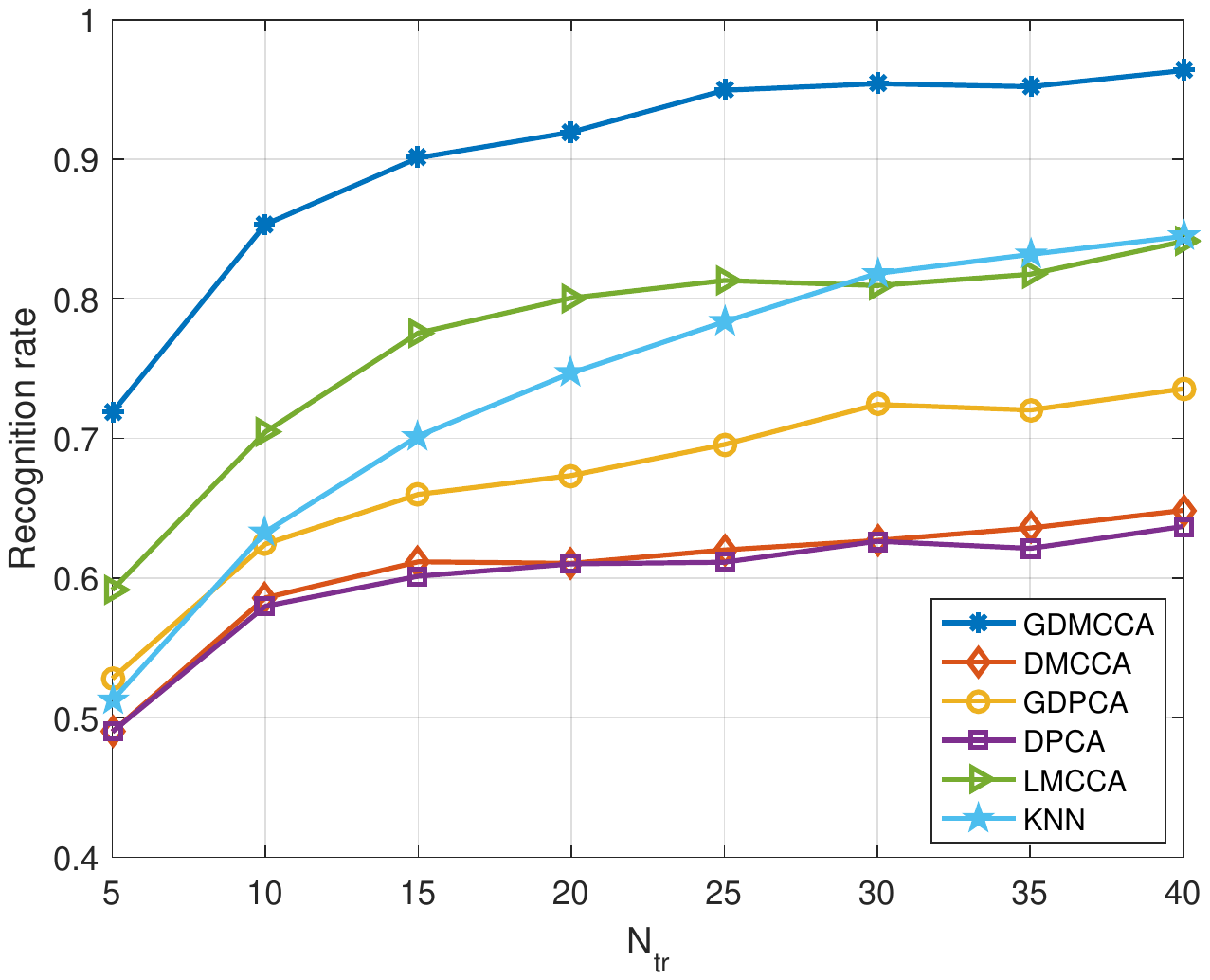}  
%	\vspace{-5pt}
%	\caption{\small{Classification performance using the 1st view of YEB data.}}
%	\label{fig:yaleview1}
%	\vspace{-5pt}
%\end{figure}
%\begin{figure}[t]
%	\centering 
%	\includegraphics[scale=0.58]{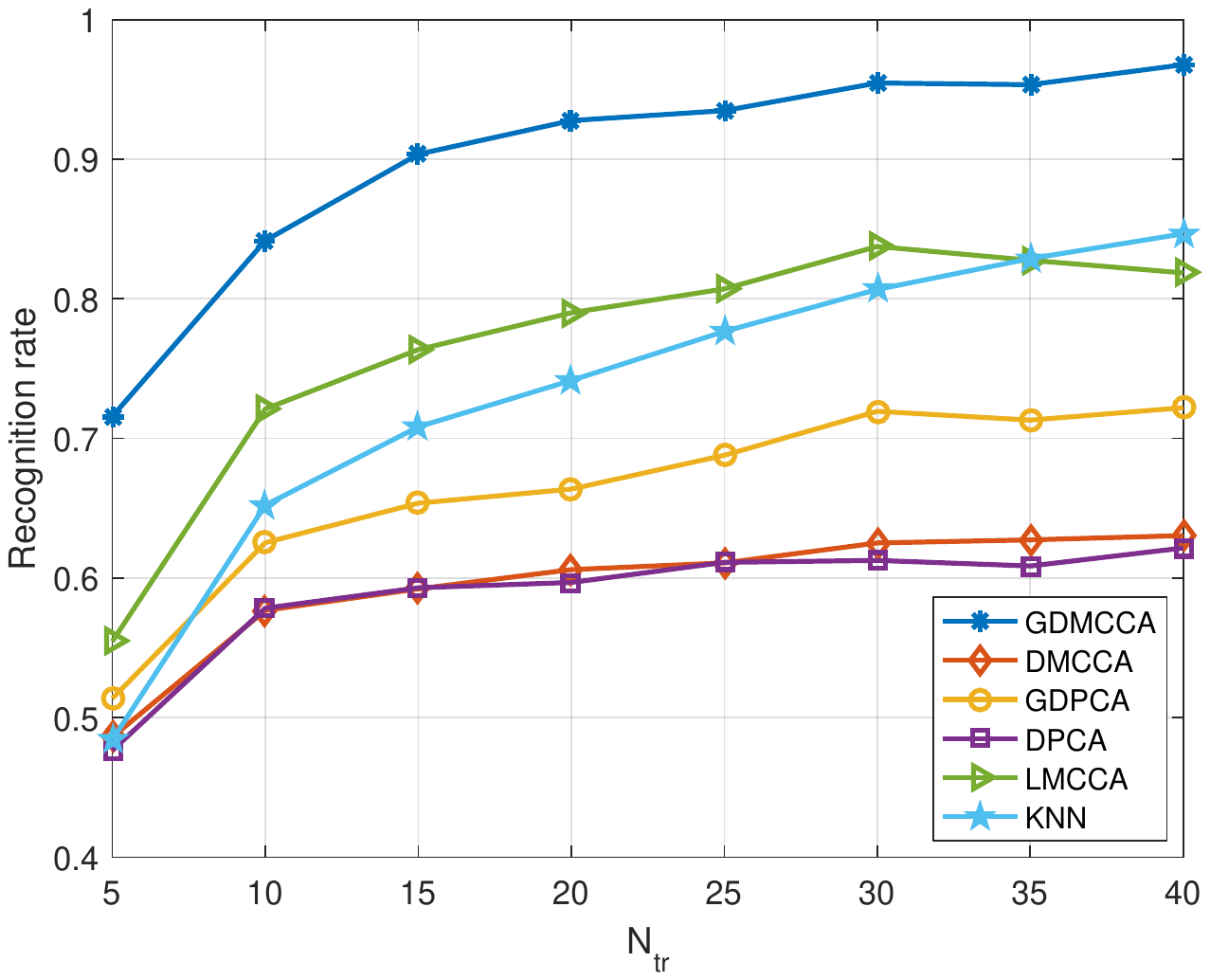}  
%	\vspace{-5pt}
%	\caption{\small{Classification performance using the 2nd view of YEB data.}}
%	\label{fig:yaleview2}
%	\vspace{-5pt}
%\end{figure}
%\begin{figure}[t]
%	\centering 
%	\includegraphics[scale=0.58]{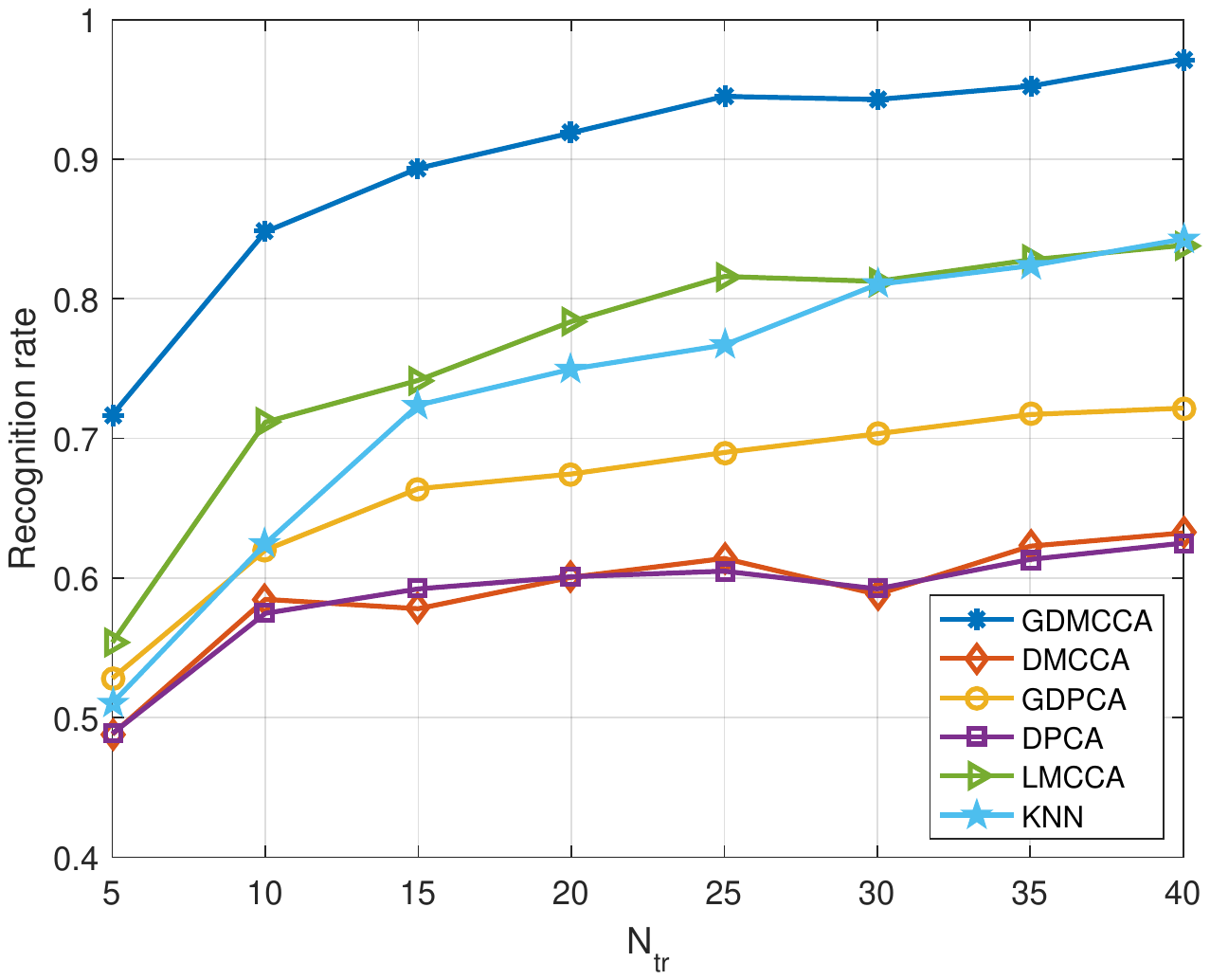}  
%	\vspace{-5pt}
%	\caption{\small{Classification performance using the 3rd view of YEB data.}}
%	\label{fig:yaleview3}
%	\vspace{-5pt}
%\end{figure}

\subsection{Image data classification}

The MNIST database\footnote{Downloaded from http://yann.lecun.com/exdb/mnist/.} containing $10$ classes of handwritten $28\times 28$ digit images with $7,000$ images per class, is used here to assess the merits of GKMCCA in classification. Per MC test, three sets of $N_{\rm tr}$ images per class were randomly picked for training, hyper-parameter tunning, and testing, respectively. We followed the process of generating the three-view training, tuning, and testing data in Sec. \ref{simu:face} to construct $\{\mathbf{X}_m\in\mathbb{R}^{196\times 10N_{\rm tr}}\}_{m=1}^3$, $\{\mathbf{X}_m^{\rm tu}\in\mathbb{R}^{196\times 10N_{\rm tr}}\}_{m=1}^3$, and $\{\mathbf{X}_m^{\rm te}\in\mathbb{R}^{196\times 10N_{\rm tr}}\}_{m=1}^3$, except that each data sample per view was resized to $14\times 14$ pixels. 

Gaussian kernels were used for $\{\mathbf{X}_m\}_{m=1}^3$, the resized, as well as the vectorized training images, denoted by $\{\mathbf{o}_i\in\mathbb{R}^{196\times 1}\}_{i=1}^{10N_{\rm tr}}$, where the bandwidth parameters were set equal to the mean of their corresponding Euclidean distances. Relying on the kernel matrix of $\{\mathbf{o}_i\}$, denoted by $\mathbf{K}_o\in\mathbb{R}^{10N_{\rm tr}\times 10 N_{\rm tr}}$, the graph adjacency matrix was constructed in the way depicted in \eqref{eq:wijdual} but with $\frac{\mathbf{o}_i^\top\mathbf{o}_j}{\|\mathbf{o}_i\|_2\|\mathbf{o}_j\|_2}$ and $20N_{\rm tr}$ replaced by the $(i,\,j)$-th entry of $\mathbf{K}_o$ and $10N_{\rm tr}$, respectively. The graph Laplacian regularized kernel multi-view (LKM) CCA \cite{blaschko2011semi} used three graph adjacency matrices, which were obtained by \eqref{eq:wijdual} after substituting $\frac{\mathbf{o}_i^\top\mathbf{o}_j}{\|\mathbf{o}_i\|_2\|\mathbf{o}_j\|_2}$ by the $(i,\,j)$-th entry of $\{\mathbf{K}_m\}_{m=1}^3$. To implement GDMCCA and GDPCA, the graph adjacency matrices were constructed via \eqref{eq:wijdual}. In all tests of this subsection, we set $k_2=N_{\rm tr}-1$. The hyper-parameters of GKMCCA, KMCCA, GKPCA, LKMCCA, GDMCCA, DMCCA, and GDPCA were selected from $30$ logarithmically spaced values between $10^{-3}$ and $10^3$, that yields the best classification performance. Ten projection vectors are learned by GKMCCA, KMCCA, GKPCA, KPCA, LKMCCA, GDMCCA, DMCCA, GDPCA, and DPCA, which are further used to obtain the low-dimensional representations of $\{\mathbf{X}_m^{\rm te}\}_{m=1}^3$. Then, the $5$-nearest neighbors rule is adopted for classification. The classification accuracies of all methods reported are averages over $30$ MC runs.

In Figs. \ref{fig:mnist_view1}, \ref{fig:mnist_view2}, and \ref{fig:mnist_view3}, the classification accuracies of the $10$-dimensional representations of $\mathbf{X}_1^{\rm te}$, $\mathbf{X}_2^{\rm te}$, and $\mathbf{X}_3^{\rm te}$ are plotted. The advantage of GKMCCA relative to other competing alternatives remains remarkable no matter which view of testing data is employed.

%\begin{figure*}[th!]
%	\centering
%	\hspace{0.1in}
%	\subfigure[1st view]{
%		\label{fig:mnist_view1} %% label for first subfigure
%		\includegraphics[width=5.72cm,height=4.84cm]{mnist_view1}}
%	%	\hspace{0.3in}
%	\subfigure[2nd view]{
%		\label{fig:mnist_view2} %% label for second subfigure
%		\includegraphics[width=5.72cm,height=4.84cm]{mnist_view2}}
%	%	\hspace{-0.5in}
%	\subfigure[3rd view]{
%		\label{fig:mnist_view3} %% label for second subfigure
%		\includegraphics[width=5.72cm,height=4.84cm]{mnist_view3}}
%	\caption{Classification results using MNIST data.}
%	\label{fig:mnist} %% label for entire figure
%\end{figure*}

%\begin{figure}[t]
%	\centering 
%	\includegraphics[scale=0.58]{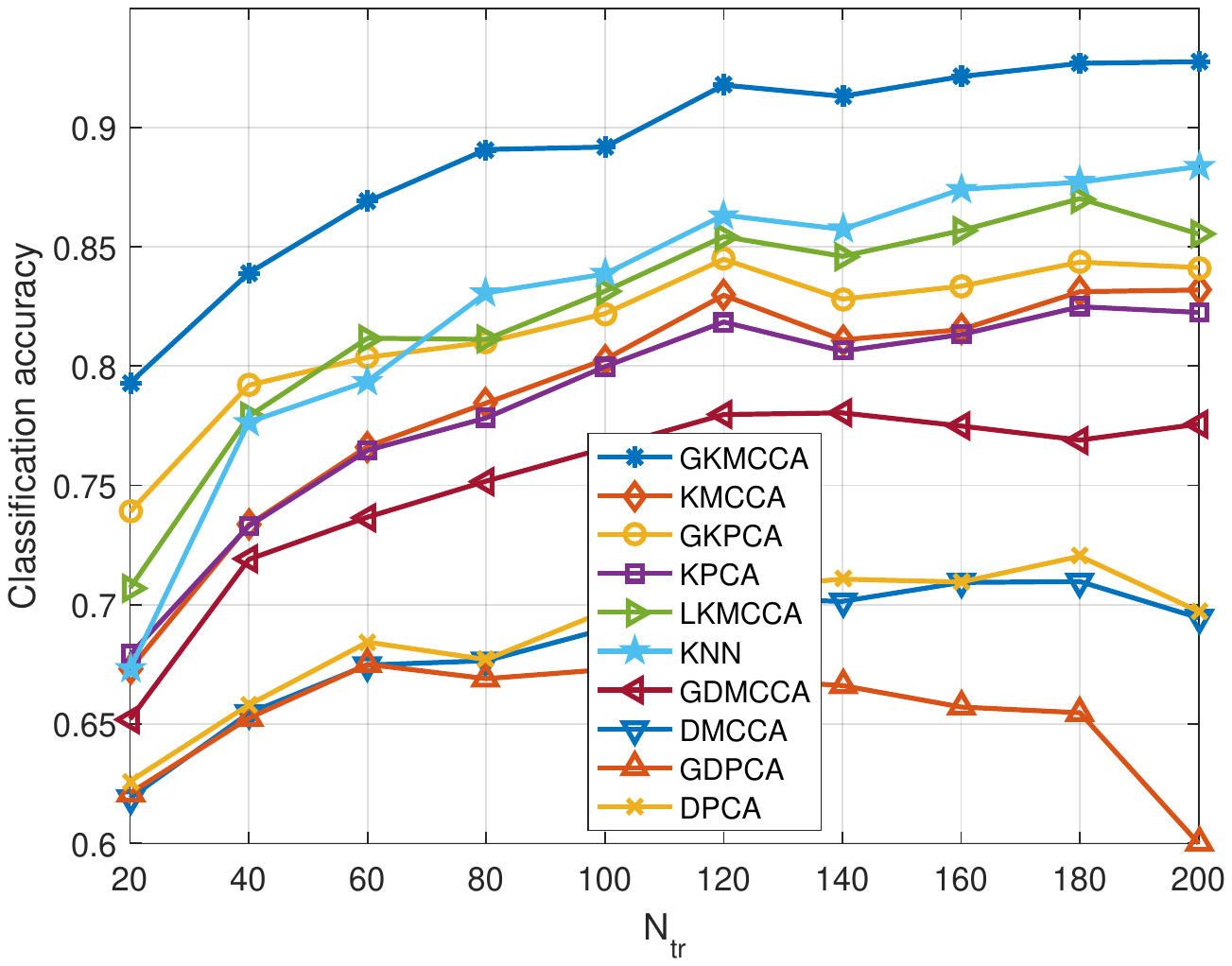}  
%	\vspace{-5pt}
%	\caption{\small{Classification results using the 1st view of MNIST data.}}
%	\label{fig:mnist_view1}
%	\vspace{-5pt}
%\end{figure}
%\begin{figure}[t]
%	\centering 
%	\includegraphics[scale=0.58]{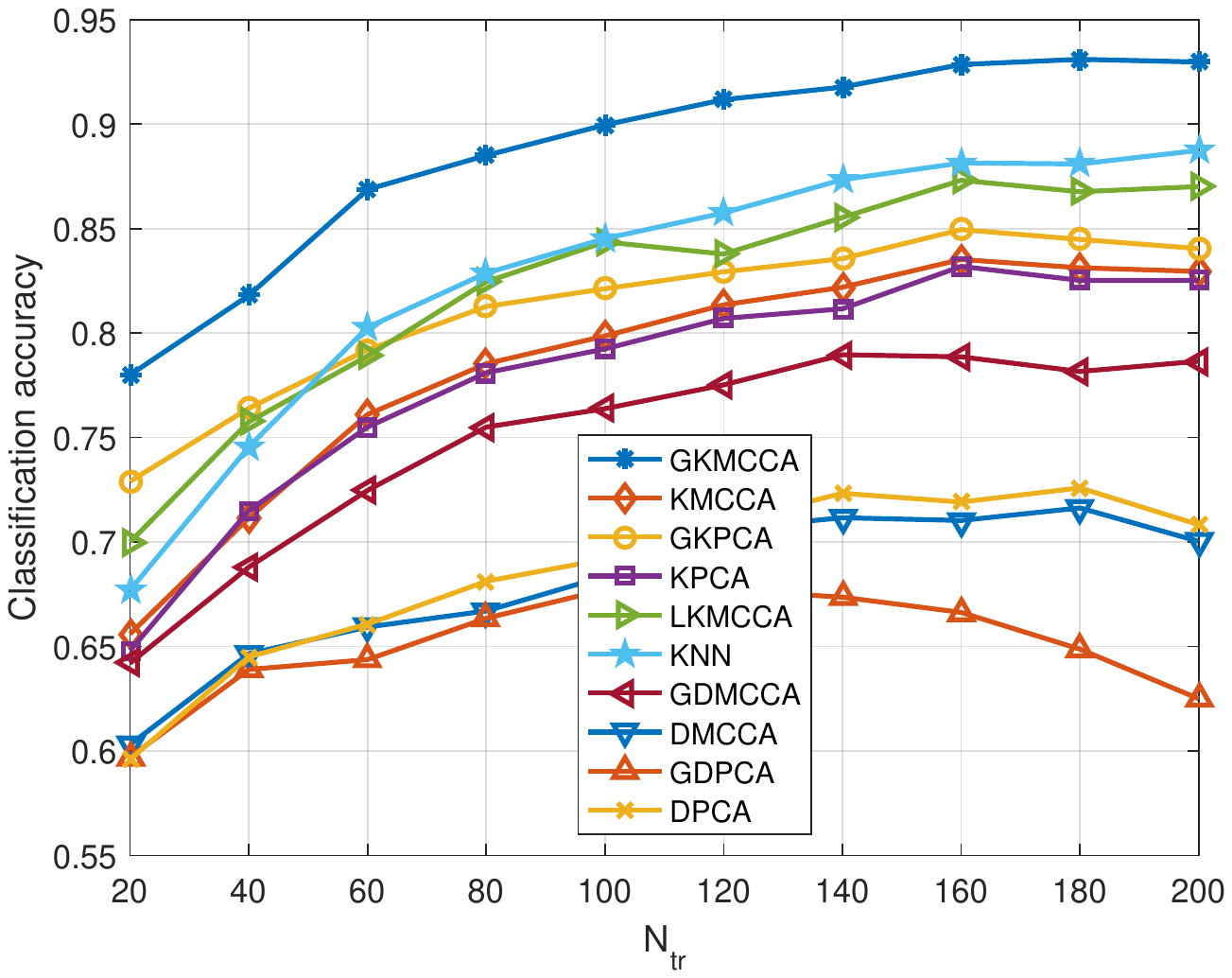}  
%	\vspace{0pt}
%	\caption{\small{Classification results using the 2nd view of MNIST data.}}
%	\label{fig:mnist_view2}
%	\vspace{-5pt}
%\end{figure}
%\begin{figure}[t]
%	\centering 
%	\includegraphics[scale=0.58]{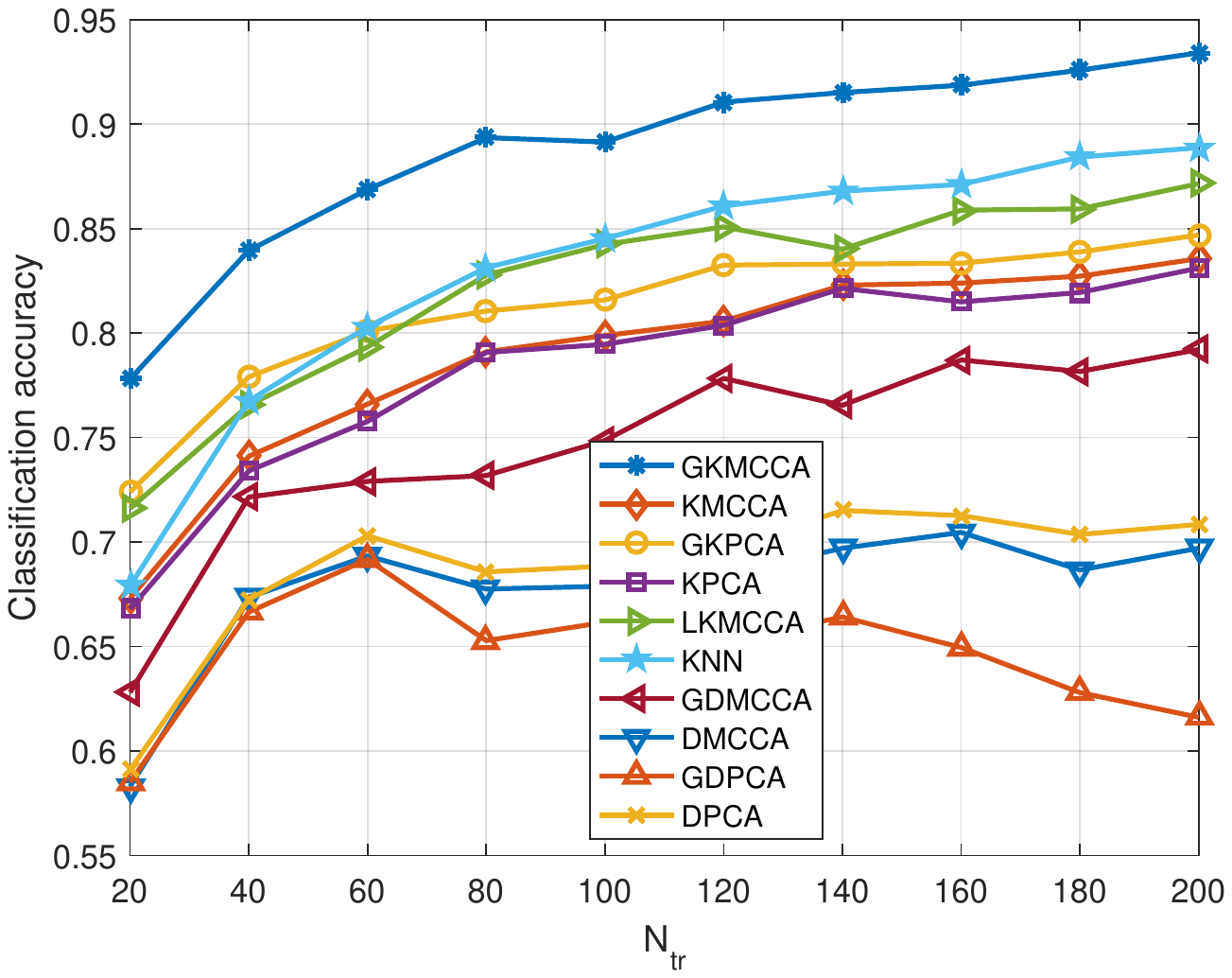}  
%	\vspace{-5pt}
%	\caption{\small{Classification results using the 3rd view of MNIST data.}}
%	\label{fig:mnist_view3}
%	\vspace{-5pt}
%\end{figure}

\section{Conclusions}
In this work, CCA along with multiview CCA was revisited. Going beyond existing (M)CCA approaches, a novel graph-regularized MCCA method was put forth that leverages prior knowledge described by graph(s) that common information bearing sources belong to. By embedding the latent common sources in a graph and invoking this extra information as a graph regularizer, our GMCCA was developed to endow the resulting low-dimensional representations. Performance analysis of our GMCCA approach was also provided through the development of a generalization bound. To cope with data vectors whose dimensionality exceeds the number of data samples, we further introduced a dual form of GMCCA. To further account for nonlinear data dependencies, we generalized GMCCA to obtain a graph-regularized kernel MCCA scheme too. Finally, we showcased the merits of our proposed GMCCA approaches using extensive real-data tests.

This work opens up several interesting directions for future research. Developing efficient GMCCA algorithms for high-dimensional multiview learning is worth investigating.  Generalizing our proposed GMCCA approaches to handle unaligned multiview datasets is also pertinent for semi-supervised learning as well. Incorporating additional structural forms regularization, e.g., sparsity and non-negativity, into the novel GMCCA framework is meaningful too.

%\newpage
%\clearpage
\bibliographystyle{IEEEtranS}
\bibliography{pca}

% Generated by IEEEtranS.bst, version: 1.13 (2008/09/30)
\begin{thebibliography}{10}
\providecommand{\url}[1]{#1}
\csname url@samestyle\endcsname
\providecommand{\newblock}{\relax}
\providecommand{\bibinfo}[2]{#2}
\providecommand{\BIBentrySTDinterwordspacing}{\spaceskip=0pt\relax}
\providecommand{\BIBentryALTinterwordstretchfactor}{4}
\providecommand{\BIBentryALTinterwordspacing}{\spaceskip=\fontdimen2\font plus
\BIBentryALTinterwordstretchfactor\fontdimen3\font minus
  \fontdimen4\font\relax}
\providecommand{\BIBforeignlanguage}[2]{{%
\expandafter\ifx\csname l@#1\endcsname\relax
\typeout{** WARNING: IEEEtranS.bst: No hyphenation pattern has been}%
\typeout{** loaded for the language `#1'. Using the pattern for}%
\typeout{** the default language instead.}%
\else
\language=\csname l@#1\endcsname
\fi
#2}}
\providecommand{\BIBdecl}{\relax}
\BIBdecl

\bibitem{andrew2013deep}
G.~Andrew, R.~Arora, J.~Bilmes, and K.~Livescu, ``Deep canonical correlation
  analysis,'' in \emph{Proc. Intl. Conf. Mach. Learn.}, Atlanta, USA, June
  16-21, 2013.

\bibitem{bartlett2002rademacher}
P.~L. Bartlett and S.~Mendelson, ``Rademacher and {G}aussian complexities:
  {R}isk bounds and structural results,'' \emph{J. Mach. Learn. Res.}, vol.~3,
  pp. 463--482, Nov. 2002.

\bibitem{tweeterdata}
A.~Benton, R.~Arora, and M.~Dredze, ``Learning multiview embeddings of
  {T}witter users,'' in \emph{Proc. Annual Meeting Assoc. Comput. Linguistics},
  vol.~2, Berlin, Germany, Aug. 7-12, 2016, pp. 14--19.

\bibitem{blaschko2011semi}
M.~B. Blaschko, J.~A. Shelton, A.~Bartels, C.~H. Lampert, and A.~Gretton,
  ``Semi-supervised kernel canonical correlation analysis with application to
  human f{MRI},'' \emph{Pattern Recognit. Lett.}, vol.~32, no.~11, pp.
  1572--1583, Aug. 2011.

\bibitem{1985ace}
L.~Breiman and J.~H. Friedman, ``Estimating optimal transformations for
  multiple regression and correlation,'' \emph{J. Amer. Stat. Assoc.}, vol.~80,
  no. 391, pp. 580--598, Sep. 1985.

\bibitem{ccasumcor}
J.~D. Carroll, ``Generalization of canonical correlation analysis to three or
  more sets of variables,'' in \emph{Proc. Annual Conv. Amer. Psychol. Assoc.},
  vol.~3, 1968, pp. 227--228.

\bibitem{tsp2018cwg}
J.~Chen, G.~Wang, and G.~B. Giannakis, ``Nonlinear dimensionality reduction for
  discriminative analytics of multiple datasets,'' \emph{IEEE Trans. Signal
  Process.}, to appear Jan. 2019. [Online]. Available:
  https://arxiv.org/abs/1805.05502.

\bibitem{chen2017distributed}
J.~Chen and I.~D. Schizas, ``Distributed efficient multimodal data
  clustering,'' in \emph{Proc. European Signal Process. Conf.}, Kos Island,
  Greece, Aug. 28 - Sep. 2, 2017, pp. 2304--2308.

\bibitem{2018cwsggcca}
J.~Chen, G.~Wang, Y.~Shen, and G.~B. Giannakis, ``Canonical correlation
  analysis of datasets with a common source graph,'' \emph{IEEE Trans. Signal
  Process.}, vol.~66, no.~16, pp. 4398--4408, Aug. 2018.

\bibitem{chen2012structured}
X.~Chen, L.~Han, and J.~Carbonell, ``Structured sparse canonical correlation
  analysis,'' in \emph{Artif. Intl. Stat.}, Mar. 2012, pp. 199--207.

\bibitem{proc2018gsk}
G.~B. Giannakis, Y.~Shen, and G.~V. Karanikolas, ``Topology identification and
  learning over graphs: {A}ccounting for nonlinearities and dynamics,''
  \emph{Proc. of the IEEE}, vol. 106, no.~5, pp. 787--807, May 2018.

\bibitem{hardoon2004canonical}
D.~R. Hardoon, S.~Szedmak, and J.~Shawe-Taylor, ``Canonical correlation
  analysis: {A}n overview with application to learning methods,'' \emph{Neural
  Comput.}, vol.~16, no.~12, pp. 2639--2664, Dec. 2004.

\bibitem{1961maxvar}
P.~Horst, ``Generalized canonical correlations and their applications to
  experimental data,'' \emph{J. Clinical Psych.}, vol.~17, no.~4, pp. 331--347,
  Oct. 1961.

\bibitem{1936cca}
H.~Hotelling, ``Relations between two sets of variates,'' \emph{Biometrika},
  vol.~28, no. 3/4, pp. 321--377, Dec. 1936.

\bibitem{gpca1}
B.~Jiang, C.~Ding, and J.~Tang, ``Graph-{L}aplacian {PCA: C}losed-form solution
  and robustness,'' in \emph{Proc. Intl. Conf. Comput. Vision Pattern
  Recognit.}, Portland, USA, Jun. 25-27, 2013.

\bibitem{kanatsoulis2018structured}
C.~I. Kanatsoulis, X.~Fu, N.~D. Sidiropoulos, and M.~Hong, ``Structured sumcor
  multiview canonical correlation analysis for large-scale data,'' \emph{IEEE
  Trans. Signal Process., \emph{see also} arXiv:1804.08806}, Sept. 2018.

\bibitem{1901pca}
F.~R.~S. Karl~Pearson, ``{LIII}. {O}n lines and planes of closest fit to
  systems of points in space,'' \emph{The London, Edinburgh, and Dublin Phil.
  Mag. and J. of Science}, vol.~2, no.~11, pp. 559--572, 1901.

\bibitem{pssechap}
V.~Kekatos, G.~Wang, H.~Zhu, and G.~B. Giannakis, \emph{{PSSE} Redux: {C}onvex
  Relaxation, {D}ecentralized, {R}obust, and {D}ynamic {A}pproaches}.\hskip 1em
  plus 0.5em minus 0.4em\relax Adv. Electr. Power and Energy; Power Sys.
  Engin., M. El-Hawary Editor; \emph{see also} arXiv:1708.03981, 2017.

\bibitem{1971kettenringcca}
J.~R. Kettenring, ``Canonical analysis of several sets of variables,''
  \emph{Biometrika}, vol.~58, no.~3, pp. 433--451, Dec. 1971.

\bibitem{yaleb}
K.~C. Lee, J.~Ho, and D.~J. Kriegman, ``Acquiring linear subspaces for face
  recognition under variable lighting,'' \emph{IEEE Trans. Pattern Anal. Mach.
  Intell.}, vol.~27, no.~5, pp. 684--698, May 2005.

\bibitem{lopez2014randomized}
D.~Lopez-Paz, S.~Sra, A.~Smola, Z.~Ghahramani, and B.~Sch{\"o}lkopf,
  ``Randomized nonlinear component analysis,'' in \emph{Proc. Intl. Conf. Mach.
  Learn.}, Beijing, China, June 21-26, 2014, pp. 1359--1367.

\bibitem{2015mccanonneg}
P.~Rastogi, B.~Van~Durme, and R.~Arora, ``Multiview {L}{S}{A}: Representation
  learning via generalized {CCA},'' in \emph{Proc. North American Chap. Assoc.
  Comput. Linguistics: Human Language Tech.}, Denver, Colorado, USA, May 31-
  June 5 2015, pp. 556--566.

\bibitem{ccanphard}
J.~Rupnik, P.~Skraba, J.~Shawe-Taylor, and S.~Guettes, ``A comparison of
  relaxations of multiset cannonical correlation analysis and applications,''
  \emph{arXiv:1302.0974}, Feb. 2013.

\bibitem{jstsp2016shahid}
N.~Shahid, N.~Perraudin, V.~Kalofolias, G.~Puy, and P.~Vandergheynst, ``Fast
  robust {PCA} on graphs,'' \emph{IEEE J. Sel. Topics Signal Process.},
  vol.~10, no.~4, pp. 740--756, Feb. 2016.

\bibitem{2012gdmf}
F.~Shang, L.~Jiao, and F.~Wang, ``Graph dual regularization non-negative matrix
  factorization for co-clustering,'' \emph{Pattern Recognit.}, vol.~45, no.~6,
  pp. 2237--2250, Jun. 2012.

\bibitem{shawe2004kernel}
J.~Shawe-Taylor and N.~Cristianini, \emph{Kernel {M}ethods for {P}attern
  {A}nalysis}.\hskip 1em plus 0.5em minus 0.4em\relax 1st ed., Cambridge,
  United Kingdom: Cambridge University Press, June 2004.

\bibitem{gpca}
Y.~Shen, P.~Traganitis, and G.~B. Giannakis, ``Nonlinear dimensionality
  reduction on graphs,'' in \emph{IEEE Intl. Wksp. Comput. Adv. in Multi-Sensor
  Adaptive Process.}, Curacao, Dutch Antilles, Dec. 10-13, 2017.

\bibitem{2018scgmkl}
Y.~Shen, T.~Chen, and G.~B. Giannakis, ``Online ensemble multi-kernel learning
  adaptive to non-stationary and adversarial environments,'' in \emph{Proc. of
  Intl. Conf. on Artificial Intell. and Stat.}, Lanzarote, Canary Islands,
  April 9-11, 2018.

\bibitem{2013mlsurvey}
S.~Sun, ``A survey of multi-view machine learning,'' \emph{Neural Comput.
  App.}, vol.~23, no. 7-8, pp. 2031--2038, Dec. 2013.

\bibitem{tang2009clustering}
W.~Tang, Z.~Lu, and I.~S. Dhillon, ``Clustering with multiple graphs,'' in
  \emph{Intel. Conf. Data Mining}, Miami, Florida, USA, Dec. 6-9, 2009, pp.
  1016--1021.

\bibitem{2014smcca}
A.~Tenenhaus, C.~Philippe, V.~Guillemot, K.-A. Le~Cao, J.~Grill, and V.~Frouin,
  ``Variable selection for generalized canonical correlation analysis,''
  \emph{Biostatistics}, vol.~15, no.~3, pp. 569--583, Feb. 2014.

\bibitem{wang2015deep}
W.~Wang, R.~Arora, K.~Livescu, and J.~Bilmes, ``On deep multi-view
  representation learning,'' in \emph{Intl. Conf. Mach. Learn.}, Lille, France,
  July 6-11, 2015, pp. 1083--1092.

\bibitem{witten2009penalized}
D.~M. Witten, R.~Tibshirani, and T.~Hastie, ``A penalized matrix decomposition,
  with applications to sparse principal components and canonical correlation
  analysis,'' \emph{Biostatistics}, vol.~10, no.~3, pp. 515--534, Apr. 2009.

\bibitem{witten2009gene}
D.~M. Witten and R.~J. Tibshirani, ``Extensions of sparse canonical correlation
  analysis with applications to genomic data,'' \emph{Statis. App. Genet.
  Molecular Bio.}, vol.~8, no.~1, pp. 1--27, Jan. 2009.

\bibitem{2003kcca}
Y.~Yamanishi, J.-P. Vert, A.~Nakaya, and M.~Kanehisa, ``Extraction of
  correlated gene clusters from multiple genomic data by generalized kernel
  canonical correlation analysis,'' \emph{Bioinformatics}, vol.~19, no.~1, pp.
  i323--i330, Jul. 2003.

\bibitem{pr2014sun}
Y.~Yuan and Q.~Sun, ``Graph regularized multiset canonical correlations with
  applications to joint feature extraction,'' \emph{Pattern Recognit.},
  vol.~47, no.~12, pp. 3907--3919, Dec. 2014.

\bibitem{zhang2017going}
L.~Zhang, G.~Wang, and G.~B. Giannakis, ``Going beyond linear dependencies to
  unveil connectivity of meshed grids,'' in \emph{Proc. {IEEE} Wkshp. on
  Comput. Adv. Multi-Sensor Adaptive Process.}, Curacao, Dutch Antilles, Dec.
  2017.

\end{thebibliography}

\end{document}